\newcommand{\refprog}{(\ref{eq:obj})--(\ref{eq:const})\,\,}
\newcommand{\refprogn}{(\ref{eq:obj})--(\ref{eq:const})}
\soulregister{\ref}{1}
\soulregister{\cite}{1}
\newcommand{\change}[1]{#1}
\newcommand{\remove}[1]{}
\title{Fast Blind MIMO Decoding \\through Vertex Hopping}
\author{Thomas~R.~Dean,~\IEEEmembership{Member,~IEEE,} 
Jonathan~R.~Perlstein,\\
Mary~Wootters,~\IEEEmembership{Member,~IEEE,} and 
Andrea~J.~Goldsmith,~\IEEEmembership{Fellow,~IEEE}%
\thanks{T. Dean and A. Goldsmith are with the Department of Electrical Engineering, Stanford, CA
94305 USA (e-mail: trdean@stanford.edu, andrea@ee.stanford.edu).}
\thanks{M. Wootters is with the Department of Computer Science and the Department of
Electrical Engineering, Stanford, CA 94305 USA (e-mail: marykw@stanford.edu).}
\thanks{J. Perlstein is with the Department of Computer Science, Stanford, CA
94305 USA (email: jrperl@cs.stanford.edu).}
\thanks{This work was presented in part at the 2018 Asilomar Conference on Signals, Systems, and Computers in Pacific Grove, CA \cite{asilomar}.
T. Dean is supported by the Fannie and John Hertz Foundation. This work was supported in part by the NSF Center for Science of Information under Grant CCF-0939370}}
\begin{document}
\maketitle 
\begin{abstract}
We present an algorithm that efficiently performs blind decoding of MIMO signals.  
That is, given no channel state information (CSI) at either the transmitter or receiver, our algorithm takes a block of samples and returns an estimate of the underlying data symbols.  
In prior work, the problem of blind decoding was formulated as a non-convex optimization problem.
In this work, we present an algorithm that efficiently solves this 
non-convex problem in practical settings.
This algorithm leverages concepts of linear and mixed-integer linear programming.  Empirically, we show that our technique has an error performance close to that of zero-forcing with perfect CSI at the receiver.  Initial estimates of the runtime of the algorithm presented in this work suggest that the real-time blind decoding of MIMO signals is possible for even modest-sized MIMO systems.

\vspace{1mm}
\noindent \emph{Index Terms}---MIMO, Multiuser detection, Blind source separation, Optimization
\end{abstract}

\section{Introduction}
In this work we propose an efficient method to blindly estimate MIMO channels
and decode the underlying transmissions. 
A previous work, \cite{dean2018blind}, has shown that 
as long as the channel gain matrix is non-singular, 
the geometry of the constellation can be exploited to recover the underlying data up to some small amount of remaining ambiguity.  
In \cite{dean2018blind}, the authors formulate the problem of blind MIMO decoding as a non-convex optimization problem and provide theoretical guarantees as to when the interior point algorithm, with a logarithmic barrier function, correctly solves this problem.

As the size of the MIMO system grows, interior-point based algorithms given in \cite{dean2018blind} become ineffective  due to both an increasing proportion of spurious optima as well as numerical instability.  In this work, we propose an algorithm inspired by techniques commonly used to solve linear and mixed-integer programs that allow us to perform blind decoding on higher order MIMO systems.  More importantly, this approach is far more computationally efficient than the approach in \cite{dean2018blind}, such that real-time decoding is possible \change{for systems as large as $n=8$.}

A blind decoding algorithm that is realistic in terms of both sample size requirements and computational complexity has important applications.  
Current trends in wireless system design are moving towards systems with shorter wavelengths, higher user mobility and more antennas per user \cite{andrews2014will, boccardi2014five, rappaport2013millimeter}.
Thus, channel state information (CSI) is increasingly rapidly varying and difficult to acquire. 
Additionally, new systems being proposed demand increased reliability and decreased latency \cite{osseiran2014scenarios}.
These factors combined imply that reducing resource overhead from channel estimation will become even more important in the development of future wireless systems.

An efficient blind decoding algorithm can enable high-rate communications in environments where the channel changes too rapidly to be measured or where communication occurs in short bursts.  Capacity of channels with CSI unknown at the receiver have been considered extensively in the literature (e.g. \cite{telatar1999capacity,goldsmith2003capacity,zheng2002communication} and references therein).
While it is theoretically possible to achieve reliable, high-rate communications in these conditions, few schemes exist that do so practically.   



\change{
    A number of previous works have considered blind decoding of MIMO systems;
    we briefly differentiate our work from prior works.  
    Many statistical approaches have been considered for blind MIMO decoding, such as {\cite{talwar1996blind,hansen1997hyperplane}},
    which are generally based on covariance matrix estimation.
    These approaches require prohibitively large sample sizes especially when channels are not well conditioned.
    Other techniques such as {\cite{ling2015blind}} require the mixing process to be structured.
    A different set of works, for example
    {\cite{zhang2018blind,mezghani2018blind, ghavami2018blind}}, consider blind decoding of massive MIMO systems by exploiting channel sparsity; in contrast our approach targets smaller (roughly $n < 12$) system sizes and dense scattering environments.
    See {\cite{dean2018blind}} for a more complete comparison of our approach to prior works.}

In this work, we compare the performance of our new approach only to \cite{dean2018blind}.  This is because previous approaches have prohibitively large sample size requirements, often growing exponentially in the number of transmit antennas.  To our knowledge, \cite{dean2018blind} is the only blind MIMO decoding method that has sample size requirements that are less than the block length of modern wireless systems.

We now highlight several notable features of the algorithm presented in this work: 
\begin{itemize}
\item For an $n \times n$ MIMO system we can typically decode a block of $k$ channel uses in $O\left( n^4 k \right)$ operations.  We also characterize the scenarios where the number of operations for decoding exceeds this bound.  For comparison, when CSI is known perfectly at the receiver, efficient MIMO decoding algorithms such as zero-forcing and MMSE require $O\left( n^3 k \right)$ operations to decode $k$ channel uses.
\item In the limit of high SNR, given appropriate inputs described in {\cite{dean2018blind}}, our approach solves the blind decoding problem with a success rate approaching 1 for systems as large as $n=12$.  In comparison, the success rate of the approach in {\cite{dean2018blind}} was bounded below 1 beyond $n=5$ and negligibly small beyond $n=8$.
\item We implement the proposed algorithm in the Rust programming language and show that the runtime of our algorithm is several orders of magnitude faster than the approach given in \cite{dean2018blind}.  For $n\leq 8$ our implementation is approaching fast enough runtimes to enable real-time blind decoding of data streams without specialized hardware.  We note that $n \leq 8$ captures nearly all MIMO systems in use today \cite{3gppRI, 80211RI}.
\item At low SNR, our approach has BER performance nearly matching zero-forcing with perfect CSI.  At high SNR, our technique has an increased BER compared to techniques with perfect CSI but the BER vanishes in the limit of no noise.  In all cases, we outperform maximum-likelihood decoding when the CSI estimate at the receiver has as little as 1\% estimation error.
\end{itemize}

In Table {\ref{tab:speed}}, we report the runtime of an implementation of our algorithm written in the Rust programming language, run on a single core of an Intel i7 2.2GHz processor.  
These numbers are preliminary as there are many more optimizations remaining to be implemented in our solver.  
For comparison, we also report the runtime of the gradient descent algorithm of \cite{dean2018blind}, implemented using MATLAB's \texttt{fmincon} solver.
In addition, we report that probability that both algorithms return a correct solution to the blind decoding problem.  
For all values of $n$, the approach presented in this paper is both faster and more reliable.  
A more detailed discussion of both the runtime and success probability is presented in Section {\ref{sec:new_alg}}.

The remainder of this paper is organized as follows.  Section \ref{sec:model} describes the system model and notation used throughout this work.  Section \ref{sec:old_alg} revisits several relevant theoretical results presented in \cite{dean2018blind}.  Section \ref{sec:new_alg} provides a high-level description of our new algorithm along with empirical results. Sections \ref{sub:initial} and \ref{sec:vertex} describe the core components of our algorithm.  Section \ref{sub:noise} discusses our algorithm in the presence of AWGN. Conclusions are offered in Section \ref{sec:conclusion}.  

\blockcomment{
\begin{table}
\caption{Runtime on a single core of a 2.2GHz Intel i7}
\label{tab:speed}
\centering
\scalebox{0.85}{
    \begin{tabular}{|c|c||c|c||c|c|}
    \hline
     \multicolumn{2}{|c||}{} & \multicolumn{2}{c||}{Vertex Hopping} & \multicolumn{2}{c|}{Gradient Descent \cite{dean2018blind}}\\
    \hline
    n & k & Pr. Success & Time (s) & Pr. Success & Time (s) \\
    \hline
    2 & 8 & 1.00 & 1.83e-5 & 0.99 & 3.01e-2 \\
    3 & 13 & 1.00 & 6.46e-5 & 0.99 & 6.33e-2 \\
    4 & 18 & 1.00 & 1.75e-4 & 0.99 & 0.13 \\
    5 & 18 & 1.00 & 2.50e-4 & 0.97 & 0.30 \\
    6 & 22 & 1.00 & 8.03e-4 & 0.93 & 0.59 \\
    8 & 30 & 0.99 & 3.52e-3 & 0.80 & 3.5 \\
    10 & 100 & 0.99 & 1.91e-2 & 0 & - \\
    12 & 144 & 0.99 & 2.28e-1 & 0 & - \\
    \hline
    \end{tabular}
}
\end{table}
}

\begin{table}
\caption{Runtime of the vertex hopping algorithm on a single core of a 2.2GHz Intel i7}
\label{tab:speed}
\vspace{2mm}

\centering
\ra{1.3}
\scalebox{0.85}{
    \begin{tabular}{@{}llllllllll@{}}
    \toprule
     \multicolumn{2}{c}{} &&& \multicolumn{2}{c}{Vertex Hopping} &&& \multicolumn{2}{c}{Gradient Descent \cite{dean2018blind}}\\
     \cmidrule{5-6} \cmidrule{9-10}
    n & k &&& Pr. Success & Time (s) &&& Pr. Success & Time (s) \\
    \midrule
    2 & 8 &&& 1.00 & 1.83e-5 &&& 0.99 & 3.01e-2 \\
    3 & 13 &&& 1.00 & 6.46e-5 &&& 0.99 & 6.33e-2 \\
    4 & 18 &&& 1.00 & 1.75e-4 &&& 0.99 & 0.13 \\
    5 & 18 &&& 1.00 & 2.50e-4 &&& 0.97 & 0.30 \\
    6 & 22 &&& 1.00 & 8.03e-4 &&& 0.93 & 0.59 \\
    8 & 30 &&& 0.99 & 3.52e-3 &&& 0.80 & 3.5 \\
    10 & 100 &&& 0.99 & 1.91e-2 &&& 0 & - \\
    12 & 144 &&& 0.99 & 2.28e-1 &&& 0 & - \\
    \bottomrule
    \end{tabular}
}
\end{table}

\section{System Model and Notation}
\label{sec:model}
In this work we consider $n \times n$ real-valued channel gain matrices, denoted $\mathbf{A}$.  \change{Unless otherwise specified}, we draw $\mathbf{A}$ where each entry is i.i.d. and normally distributed with zero mean and unit variance, $\mathcal{N}(0,1)$. Our technique requires only that $\mathbf{A}$ be full rank and thus $\mathbf{A}$ may be drawn from an arbitrary distribution. 
We assume AWGN, drawn from $\mathcal{N}\left(0,\sigma^2\right)$, is present in the channel.
Finally, we consider a block fading model where the channel gain matrix is constant over a period of $k$ channel uses after which it is redrawn independently.  This work focuses on the transmission of BPSK signals over such channels.\footnote{While we only consider $n \times n$ real channels, we note that the results in this work can be extended to $n \times n$ complex-valued channels by considering the usual $2n \times 2n$ equivalent real-valued channel gain matrix and can be extended  rectangular channels as discussed in \cite{dean2018blind}.
The results in \cite{dean2018blind} extend to general MPAM constellations; the performance of the algorithms presented here under the presence of higher-order modulation is a topic of on going research.}  

The vector $\mathbf{x} \in \{-1,+1\}^n$ denotes the symbols transmitted in a single channel use and the matrix $\mathbf{X} \in \{-1,+1\}^{n \times k}$ denotes the set of symbols transmitted over a single block of $k$ channel uses. Likewise, single observations and blocks of symbols at the receiver are denoted as $\mathbf{y}$ and $\mathbf{Y}$ respectively.

We assume that \emph{no CSI} is available at either the transmitter or the receiver.  Without the aid of pilot symbols or any knowledge of the underlying data symbols, the receiver attempts to recover an estimate of $\mathbf{X}$, denoted $\hat{\mathbf{X}}$.  However, as discussed in \cite{dean2018blind}, without additional side information, the receiver is only capable of recovering $\mathbf{X}$ up to an acceptable transform matrix (ATM), meaning that within each block, $\hat{\mathbf{X}}$ is correct up to permutation and negation of the rows.  We denote the set of ATMs as $\mathcal{T}$.
In the high SNR limit, we say that an algorithm solves the blind decoding problem if, given only $\mathbf{Y}$ as input, it returns a value of $\hat{\mathbf{X}}$ that is equivalent to $\mathbf{X}$ up to an ATM.

The $i$th column of the matrix $\mathbf{A}$ is denoted as $\mathbf{a}_i$ and its
$j$th row as $\mathbf{a}^{(j)}$. The vector $\mathbf{e}_i$ represents the $i$th
element of the standard basis. The set $\text{cols}(\mathbf{X})$ denotes the set
of vectors that comprise the columns of $\mathbf{X}$ and
$\text{vec}(\mathbf{X})$ denotes the $nk \times 1$ vector that consists of the
entries of $\mathbf{X}$ in row-major order. 
The notation $\mathbf{U}^{-\intercal}$ is shorthand for $\left( \mathbf{U}^{-1} \right)^\intercal$.  The symbol $\mathbb{R}_+$ denotes the domain of non-negative real numbers, whereas $\mathbb{R}_{++}$ denotes the positive real numbers.

\section{Fitting a Parallelepiped}
\label{sec:old_alg}
In \cite{dean2018blind} the authors formulate the blind decoding problem as the following non-convex optimization problem:
\begin{align}
& \underset{\mathbf{U}}{\text{maximize}} 
& & \log | \det \mathbf{U} | \label{eq:obj}\\
& \text{subject to}
& & \| \mathbf{Uy}_i \|_\infty \leq 1 + c \cdot \sigma, \; i = 1, \ldots, k,
\label{eq:const}
\end{align}
where $c$ is some margin which is chosen based on the noise variance.
Given proper input, the set of optimal $\mathbf{U}$ are equivalent to $\mathbf{A}^{-1}$ up to an ATM.
Geometrically, this problem can be interpreted as fitting the minimum volume parallelepiped that matches the observed samples. 
In \cite{dean2018blind} the authors show that despite the fact that the problem is not convex, under certain assumptions, gradient descent returns the correct solution with high probability.  
Here, we briefly recount several important theoretical facts proven in \cite{dean2018blind} about the problem given by \refprogn. 
We initially focus on the noiseless case ($\sigma=0$), and return to the case $\sigma \neq 0$ in Section \ref{sub:noise}.  We also assume that $\mathbf{Y}$ is full rank; if it is not then \refprog is not a well-posed problem.

In \pref{eq:const}, each $\mathbf{y}_i$ imposes two linear constraints on each row $\mathbf{u}^{(j)}$,
that is $-1 \leq \left\langle \mathbf{u}^{(j)}, \mathbf{y}_i \right\rangle \leq 1$, for all $i,j$.  The feasible region is thus an $n^2$-dimensional polytope. 
We say that a given $\mathbf{U}$ is at a vertex of this polytope if $\mathbf{UY}\in \{-1,+1\}^{n \times k}$.  Note that the objective function is not defined at all vertices of the feasible region; if $\mathbf{UY}$ is not full rank, this implies $\mathbf{U}$ is singular and the value of \pref{eq:obj} is not defined.
If two vertices $\mathbf{U}_1$ and $\mathbf{U}_2$ are adjacent (share an edge of the polytope) then this also implies that the Hamming distance between $\mathbf{U}_1 \mathbf{Y}$ and $\mathbf{U}_2 \mathbf{Y}$ is 1.  

A matrix $\mathbf{X} \in [-1, +1]^{n \times k}$, and corresponding set $\text{cols}(\mathbf{X}) \subseteq [-1,+1]^{n}$ with $k \geq n$, has the
\emph{maximal
subset property} (MSP) if there is a subset $\text{cols}(\mathbf{V}) \subseteq \text{cols}(\mathbf{X})$ of size $n$ so that if
$\mathbf{V} \in \mathbb{R}^{n \times n}$ is the matrix with elements of $\text{cols}(\mathbf{V})$ as
columns, then
\begin{equation*}
|\det \mathbf{V}| = \underset{\mathbf{W} \in [-1, +1]^{n \times n}}{\mathrm{max}}
           |\det \mathbf{W}|.
\end{equation*}
Since $\det(\mathbf{V})$ is linear in the columns of $\mathbf{V}$, then $\mathbf{V}$ also maximizes the determinant amongst all matrices in $\{-1, +1\} ^ {n \times n}$.
We note the following additional facts about the program given in \refprogn:

\begin{itemize}
\item If the matrix $\mathbf{X}$ has the maximal subset property, then the set of global optima of \refprog contain all solutions to the blind decoding problem.  This is proven in \cite{dean2018blind}.
\item The gradient of the objective function is given by
\begin{equation}
\label{eq:gradient}
\nabla \left( \log | \det \mathbf{U} | \right) =
\mathbf{U}^{-\intercal}.
\end{equation}
\item For $n<6$, all optima of \refprog are global optima. In \cite{dean2018blind} the authors give specific values of $\mathbf{X}$ that guarantee all optima of \refprog are solutions to the blind decoding problem for $n\leq 4$.  The case $n=5$ is discussed in \cite{perlstein2018n5case}.
\item Solutions to the blind decoding problem lie on vertices of the feasible region.  When $n$ is such that a Hadamard matrix exists, all optima are strict and lie on vertices.  In all cases, optima will only lie on the boundary of the feasible region.  This is proven in \cite{dean2018blind}.
\end{itemize}

While it is shown in \cite{dean2018blind} that the interior point method with a
logarithmic barrier function provably solves the non-convex blind decoding
problem for small $n$, this approach has several practical limitations.  First,
for $n>5$, local optima exist,  and this approach is no longer guaranteed to be correct. As $n$ grows substantially beyond 5, the proportion of optima that are local increases and thus this approach becomes less effective.
Second, this approach is not efficient from a computational perspective.  All solutions are located on vertices of the feasible region.  In general, barrier methods are not well suited to solve this class of problems \cite{boyd_opt}. The gradient of the objective function varies rapidly near the boundary of the feasible region which, along with the large number of linear constraints imposed by \pref{eq:const}, creates numerical instability.  As a result, even at low dimension, off-the-shelf solvers such as MATLAB's \texttt{fmincon} will require a large number of Newton steps to converge.  Further, in our numerical experiments in solving \refprogn, we have observed that such interior point solvers almost never converge for $n>8$.  

\blockcomment{
\begin{table}
\caption{\change{Probability a $\pm 1$-valued matrix drawn uniformly at random has the MSP.}}
\label{tab:sample}
\centering
\scalebox{0.85}{
    \begin{tabular}{|c||c|c|c|}
    \hline
    n & 90\% & 99\% & $1 - 10^{-6}$ \\
    \hline
    2 & 5 & 9 & 22 \\
    4 & 10 & 13 & 26\\
    6 & 14 & 18 & 29 \\
    8 & 16 & 20 & 34 \\
    10 & 28 & 34 & 40 \\
    \hline
    \end{tabular}
}
\end{table}
}

\begin{table}
\caption{Probability a $\pm 1$-valued matrix with $n$ rows drawn uniformly at random has the MSP by number of columns.}
\label{tab:sample}
\vspace{2mm}

\ra{1.3}
\centering
    \begin{tabular}{@{}lllll@{}}
    \toprule
    n\quad && 90\% & 99\% & $1 - 10^{-6}$ \\
    \midrule
    2 && 5 & 9 & 22 \\
    4 && 10 & 13 & 26\\
    6 && 14 & 18 & 29 \\
    8 && 16 & 20 & 34 \\
    10 && 28 & 34 & 40 \\
    \bottomrule
    \end{tabular}
\end{table}

\subsection{\change{Sample Size Requirements}}
\label{sub:samplesize}
\change{
    In {\cite{dean2018blind}}, the authors provide both empirical and analytic results describing the probability that the matrix $\mathbf{X}$ has the MSP.  
    If $\mathbf{X}$ does not have the MSP, then the global optima of ({\ref{eq:obj}})--({\ref{eq:const}}) do not correspond to solutions of the blind decoding problem.  
    In this case, both the approach described in {\cite{dean2018blind}} and the approach described in this work will fail to output a correct solution
    to the blind decoding problem.
    In Table {\ref{tab:sample}}, we provide the probability that a matrix $\mathbf{X}$ drawn uniformly over $\{-1, +1\}^{n \times k}$ has the MSP for several values of $n$ and $k$.  
    Note that the MSP alone is not sufficient to ensure that all global optima of ({\ref{eq:obj}})--({\ref{eq:const}}) are solutions to the blind decoding problem ---
    the criteria for ensuring that no spurious optima exist varies drastically for each value of $n$.
    For most values of $n$, we typically require only one or two columns that are pair-wise independent from the maximal subset of columns that form the MSP.  
    Thus, the probabilities given in Table {\ref{tab:sample}} closely approximate the probability that ({\ref{eq:obj}})--({\ref{eq:const}}) has no spurious optima.
    See {\cite{dean2018blind}} for a more complete discussion of sample size requirements.
}

\section{Algorithm Overview and Performance}
\label{sec:new_alg}
The program \refprog is not linear, nor is it even convex.  One should not necessarily expect tools from convex optimization, let alone linear programming, to work well.  However, we adapt such techniques by leveraging two facts about the problem geometry: the fact that the objective function is multilinear in the rows of $\mathbf{U}$, and that solutions to the blind decoding problem lie on vertices of the feasible region.  Informally, our algorithm attempts to find an appropriate solution to the blind decoding problem by hopping between vertices of the feasible region in a similar manner to the simplex algorithm.  For this reason, we refer to our algorithm as the \emph{vertex hopping algorithm}.  In this section, we give a high-level overview of our approach to blind decoding as well as empirical results demonstrating its BER performance.


The first step of the vertex hopping algorithm is finding a value of $\mathbf{U}$ that is at a non-singular vertex of the feasible region.  We refer to this process as \emph{Vertex Finding}. It turns out that finding such a vertex is a non-trivial task and is often the majority of the work in solving the problem.  This procedure is described in Section \ref{sub:initial}.  

\begin{figure*}
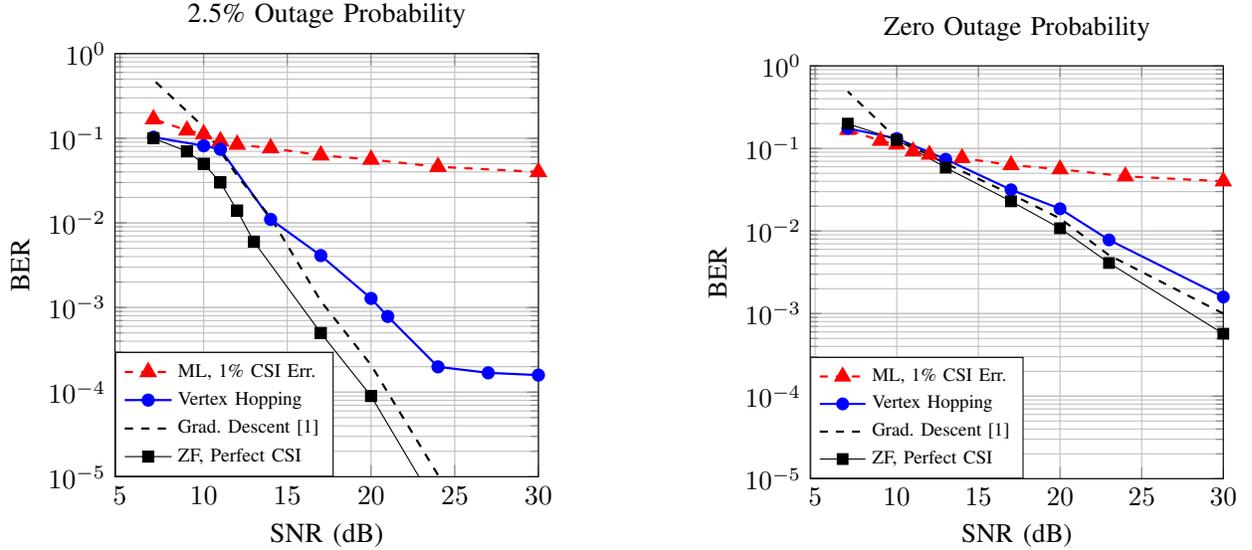

\centering
\hfill
\includestandalone[width=0.41\linewidth]{figures/ber}
\hspace{1.6cm}
\includestandalone[width=0.40\linewidth]{figures/gaussian_all}
\hspace{0.6cm}
\caption{The BER using various MIMO decoding schemes to estimate $\mathbf{X}$ in the presence of AWGN.  Here $n=4, k=30$, where $\mathbf{A}$ is drawn with i.i.d. Gaussian entries. For large values of $n$, both gradient descent and ML decoding become prohibitively computationally expensive.}
\label{fig:awgn}
\end{figure*}

Once we have found an initial non-singular vertex of the feasible region, we efficiently explore neighboring vertices of the feasible region in search of a global optimum.  This is accomplished in a similar manner as the simplex algorithm.  We form a tableau from the linear constraints that define the feasible region and hop between vertices by performing Gauss-Jordan pivots on this data structure.  
This procedure is described in greater depth in Section \ref{sec:vertex}.

At each step, we hop to the neighboring vertex that has the largest increase in objective function and backtrack if we find a local optimum.  We discuss how to identify when we are at a local versus a global optimum in Section \ref{sub:stop}.  Local optima are rare and do not exist for $n \leq 5$.  In nearly every case the algorithm terminates after at most a few hops; we elaborate on when this is not the case in Section \ref{sec:vertex}.

\subsection{Numerical Results}
\label{sec:numerical}
In Figure {\ref{fig:awgn}}, we show empirical results regarding the resulting BER of our scheme in the presence of AWGN
\change{ 
    both with a 2.5\% outage probability and a zero-outage probability for the Gaussian model described in Section {\ref{sec:model}}. 
    In Figure {\ref{fig:rayleigh}}, we present the same results for a Rayleigh fading scenario, where each entry of the channel gain matrix is drawn i.i.d. from a Rayleigh distribution with unit variance.
}  
We compare our algorithm to zero-forcing with perfect CSI as well as to the gradient-descent based approach given as Algorithm 1 in \cite{dean2018blind}, which we henceforth refer to as simply `gradient descent'.  We also compare our algorithm to maximum-likelihood decoding with imperfect CSI.  
\change{
    As described in Section {\ref{sub:noise}}, our algorithm may fail when attempting to recover the channel gain in the presence of AWGN.  In this set of simulations, an outage is determined by the failure rate of our algorithm. 
    We observe that both the failure rate and BER performance of our algorithm is highly related to the condition number of the channel.  Removing the cases where our algorithm fails is nearly equivalent to removing the channels that are the most poorly conditioned.
    Additionally, we note that for $n=4$, the average Rayleigh channel has a higher condition number than the average Gaussian channel, which explains why the performance of our algorithm in Figure {\ref{fig:rayleigh}} is superior to the performance in Figure {\ref{fig:awgn}}
}
For all SNR values tested, we outperform ML decoding with an estimation error modeled as AWGN with variance as low 1\% of the channel noise variance.
We see that at low SNR, our algorithm nearly matches zero-forcing. At high SNR, our approach yields an increased BER over schemes like zero-forcing.  This is caused by the rounding procedure used in our vertex finding algorithm presented in Section {\ref{sub:noise}}.  We discuss why performance degradation arises in more detail in Section {\ref{sub:ber}}.  It is an open problem to develop alternative approaches for high SNR that mitigate this noise enhancement.     
\change{
    Obtaining analytic results regarding the performance of our algorithm in the presence of noise appears extremely difficult as one must account for randomness in the algorithm, the channel gain matrix, the transmitted symbols, and the AWGN.
}

\begin{figure*}
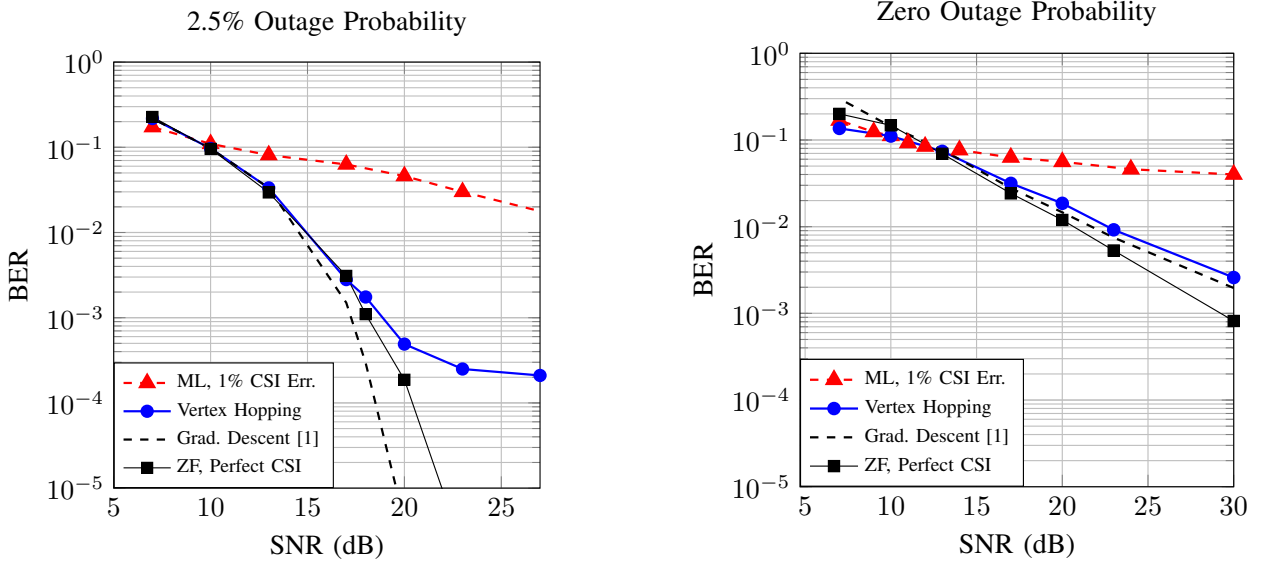

\hfill
\includestandalone[width=0.40\linewidth]{figures/rayleigh_outage}
\hspace{1.6cm}
\includestandalone[width=0.42\linewidth]{figures/rayleigh_all}
\hspace{5mm}
\caption{The BER using various MIMO decoding schemes to estimate $\mathbf{X}$ in the presence of AWGN where the channel gain matrix is Rayleigh distributed, with $n=4$ and $k=30$.}
\label{fig:rayleigh}
\end{figure*}

In Figure {\ref{fig:success}} we give the success probability of the vertex hopping algorithm for various values of $n$ in the limit of high SNR, where success is defined as properly recovering $\mathbf{X}$ correctly up to an ATM.  Here $\mathbf{X}$ is generated uniformly at random, and the success probability of our algorithm almost exactly matches the empirical probability that a random $\mathbf{X}$ has the correct theoretical guarantees provided in \cite{dean2018blind}.  We outperform the gradient descent algorithm in \cite{dean2018blind} beyond $n=5$.

\begin{figure}
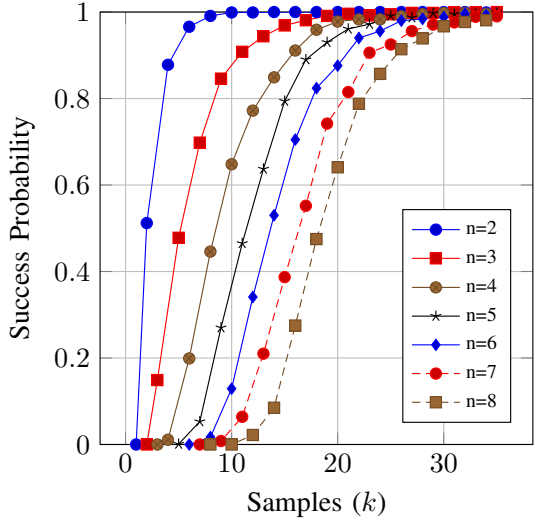

\centering
\includestandalone[width=0.8\linewidth]{figures/new_success}
\caption{The success rate of our approach when $\mathbf{X}$ is drawn uniformly at random for various values of $n$. Our vertex hopping algorithm succeeds with near 100\% likelihood when $\mathbf{X}$ has the theoretical properties to ensure recovery, given in \cite{dean2018blind}.}
\label{fig:success}
\end{figure}

\subsection{Performance Analysis}
\label{sub:performance}
A rigorous theoretical analysis of the time complexity of our algorithm is beyond the scope of this paper.
However, we informally comment on the typical performance observed by our algorithm.
Our algorithm treats the blind decoding problem as a mixed-integer linear program; optimizing such programs is classic example of an NP-complete problem \cite{karp1972reducibility}.
Thus, it is likely that exactly solving ({\ref{eq:obj}}) -- ({\ref{eq:const}}) becomes hard for large $n$.


We note, however, that in case that occurs with non-negligible probability for $n \leq 4$, the initialization step given in Section {\ref{sub:initial}} will directly return a global optima.
The time complexity of this step of our algorithm is considered in more depth in Section \ref{sub:initial}, where it is shown to be $O\left(n^4 k\right)$.  
Further, we have observed that for $n \leq 8$, if the problem is not solved by the initialization step, we typically require only one or two vertex hops.  As described in Section \ref{sub:pivot}, a single iteration of the pivoting process requires only $O\left( n^3 \right)$ operations.
Thus in practice, our algorithm typically performs near the limit of $O\left(n^4 k\right)$.
Empirically, we find that for $n \leq 5$, the runtime of our algorithm indeed scales roughly linearly with $k$.  
For larger values of $n$, we discuss the dependence of the runtime on $k$ in Section {\ref{sub:trap}}.

For $n \geq 6$, where local optima exist, we occasionally encounter a `trap' case as described in Section \ref{sub:tableau}.  
In these cases the solver often enumerates a large subspace before exiting, thus inflating the average runtime.
Beginning at $n = 8$, the runtime of our algorithm begins to noticeably increase with $n$. 
We note that the spectrum of possible determinants of $\pm 1$-valued matrices grows rapidly as $n$ grows (see \cite{orrick_spectrum}) and we conjecture that this rapid growth of possible determinants leads to the increased runtime of our algorithm with $n$.

\section{Initialization}
\label{sub:initial}
Linear programs are often solved by the simplex technique which hops between vertices of the feasible region in order to find a global optimum.  Typically the simplex technique begins at the origin of the feasible region.
When the origin of a linear program is not feasible,
techniques exist to find a suitable feasible solution, often termed a basic feasible solution or BFS \cite{papadimitriou1998combinatorial}.
Unfortunately, it is not clear how to leverage standard techniques to solve our problem.
In our case, the origin is singular, as are a majority of the vertices of the feasible region, and so in particular the gradient is not defined and we cannot start our vertex-hopping technique at these points.

We now present our algorithm for finding a suitable BFS, which is summarized in Algorithm {\ref{alg:init}} and is referred to as the `vertex finding' process.
The $\ell_\infty$ constraints in \pref{eq:const} can be expressed as $2kn \times n^2$ linear inequality constraints: let $\mathbf{u} = \text{vec}(\mathbf{U})$, then \pref{eq:const} can be expressed as $\bar{\mathbf{Y}}\mathbf{u} \leq \mathbbm{1}$ for some appropriate $\bar{\mathbf{Y}}$.  Suppose $\mathbf{U}$ satisfies $l$ constraints with equality. Then we form the $l \times n^2$ matrix $\mathbf{B}$ by taking the appropriate rows of $\bar{\mathbf{Y}}$.  The matrix $\mathbf{B}$ now describes the active constraints.  If we consider $\tilde{\mathbf{U}} = \mathbf{U} + \Delta$ and $\Delta \in \text{null}(\mathbf{B})$, then $\tilde{\mathbf{U}}$ will still, at a minimum, satisfy the same $l$ constraints with equality. We refer to the process of setting  $\tilde{\mathbf{U}} = \mathbf{U} + \Delta$ for some $\Delta \in \text{null}(\mathbf{B})$ as ``moving within the nullspace of the active constraints''.

We begin by choosing a random feasible point using the technique described in \cite{dean2018blind}, \change{which involves drawing $\mathbf{U}$ uniformly from the set of orthogonal matrices of order $n$ and then scaling $\mathbf{U}$ until $\mathbf{UY}$ is feasible}. Our technique successively solves one-dimensional optimization problems. At each step, $\mathbf{U}$ is perturbed so that at least one additional constraint becomes active.  
This is accomplished by projecting the gradient onto the nullspace of the active constraints.  
The solver then moves in this direction,
leaving the already active constraints unchanged, until an additional constraint becomes active.
As a result, when Algorithm \ref{alg:init} terminates, we are guaranteed that at least $n^2$ constraints are active.  A graphic depiction of this process is shown in Figure \ref{fig:dynamic}. 

\begin{figure}
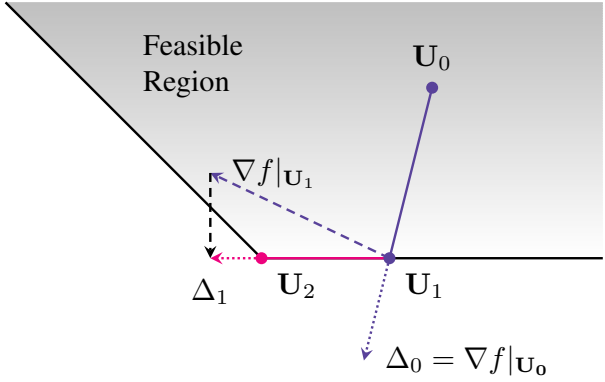

\centering
\includestandalone[width=0.9\linewidth]{figures/dynamic}
\caption{A graphical depiction of the vertex finding process. $\mathbf{U}_0$ is chosen uniformly at random.  We then proceed in the
direction given by the gradient until we reach the boundary of the feasible
region.  The next step is taken by projecting the gradient at $\mathbf{U}_1$
onto the nullspace of the basis formed by the set of active constraints.}
\label{fig:dynamic}
\end{figure}

\begin{algorithm}
 \caption{Vertex Finding}
 \label{alg:init}
 \begin{algorithmic}[1]
 \renewcommand{\algorithmicrequire}{\textbf{Input:}}
 \renewcommand{\algorithmicensure}{\textbf{Output:}}
 \REQUIRE An $n \times k$ matrix of received samples $\mathbf{Y}$. Any feasible, full rank starting point $\mathbf{U}$.
 \ENSURE  A matrix $\mathbf{U}$ that satisfies at least $n^2$ linearly independent constraints of \pref{eq:const} with equality.
  \WHILE {$\mathbf{B}$ is not full rank}
  \STATE $\Delta =\mathbf{U}^{-\intercal}$.
  \STATE Find $\mathbf{N}$, a basis for $\text{null}(\mathbf{B})$.
  \STATE Compute $\Delta = \text{proj}_\mathbf{N} \Delta$.
  \STATE Find max $t \in \mathbb{R}_+$ such that \\ $\| (\mathbf{U} + t \Delta) \mathbf{Y} \|_\infty = \pm1$.
  \STATE Update $\mathbf{B}$.
 \ENDWHILE
 \RETURN $\mathbf{U}$
 \end{algorithmic} 
 \end{algorithm}

 \begin{algorithm}
 \caption{Efficient Projection}
 \label{alg:basis}
 \begin{algorithmic}[1]
 \renewcommand{\algorithmicrequire}{\textbf{Input:}}
 \renewcommand{\algorithmicensure}{\textbf{Output:}}
 \REQUIRE The $l \times n^2$ matrix $\mathbf{B}$, the $n \times n$ matrix $\Delta$.
 \ENSURE  The component of $\Delta$ that lies in $\text{null}(\mathbf{B})$.
  \STATE Compute $\mathbf{C=B B}^\intercal$.  Save indices of non-zero off-diagonal entries of $\mathbf{C}$.
  \FOR {each $i<j$ such that $c_{ij} \neq 0$}
  \STATE $\mathbf{u} = \mathbf{b}^{(i)}$, 
  		 $\mathbf{v} = \mathbf{b}^{(j)}$
  \STATE $\mathbf{v} = \mathbf{v} - \left(\mathbf{u} \mathbf{v}^\intercal / \mathbf{u u}^\intercal \right) \mathbf{u}$
  \IF {$\| \mathbf{v} \| > 0$}
  	\STATE $\mathbf{b}^{(j)} = \mathbf{v} / \| \mathbf{v} \|$
  \ELSE
  	\STATE Store $j$ as redundant.
  \ENDIF
 \ENDFOR
 \STATE Remove redundant rows of $\mathbf{B}$.
 \STATE $\mathbf{\delta} = \text{vec}(\mathbf{\Delta})$
 \FOR {each row $\mathbf{b}^{(i)}$ in $\mathbf{B}$}
 	\STATE $\mathbf{\delta} = \mathbf{\delta} - \left( \mathbf{\delta} \cdot
\left(\mathbf{b}^{(i)}\right)^\intercal \right) \mathbf{b}^{(i)}$
 \ENDFOR
 \RETURN $\text{mat}(\delta)$
 \end{algorithmic} 
 \end{algorithm}

Algorithm \ref{alg:init} will return a value of $\mathbf{U}$ such that at least $n^2$ entries of $\mathbf{UY}$ will be equal to $\pm 1$, with at least $n$ of these entries per row.   
Because the remaining entries of $\hat{\mathbf{X}}=\mathbf{UY}$ are not
independent, we often find that most, if not all, entries equal $\pm 1$.  
In Section \ref{sub:trap} we discuss how often Algorithm \ref{alg:init} returns a
suitable $\hat{\mathbf{X}}$, and more complete empirical results regarding the
 distribution of the elements of $\mathbf{UY}$ at the output of Algorithm \ref{alg:init} are given in Section \ref{sub:noise}.

After running Algorithm \ref{alg:init}, the resulting columns of $\hat{\mathbf{X}}$ must be separated into `good' versus `bad' columns.  Good columns are those that lie in $\{-1,+1\}^n$.  We let $\mathbf{Y}_g$ and $\mathbf{Y}_b$ denote the matrices composed of corresponding good and bad columns of $\mathbf{Y}$ respectively.
If $\mathbf{Y}_g$ is not full rank, then we must rerun Algorithm \ref{alg:init} again until a suitable $\hat{\mathbf{X}}$ is obtained.  For sufficiently large $k$, it is very rare in the noiseless case that Algorithm \ref{alg:init} fails to produce a suitable output; we further quantify when this happens in Section \ref{sub:trap}. 

We now briefly comment on the time complexity of Algorithm \ref{alg:init}.
Algorithm \ref{alg:init} will require $n^2$ iterations to ensure that at least $n^2$ constraints become active.  Computing the inverse of $\mathbf{U}$ requires $O\left(n^3\right)$ operations.  The process of projecting the gradient onto the nullspace of the active constraints is considered in Section \ref{sub:projection} where it is shown that this requires at most $O\left( n^3 \right)$ operations per call. Finally, updating $\mathbf{B}$ requires computing the product $\mathbf{\Delta Y}$ which requires $O\left( n^2 k \right)$ operations.  Since $k > n$, this means that Algorithm \ref{alg:init} requires $O\left(n^4 k\right)$ operations.

\subsection{Efficient Projection}
\label{sub:projection}
At each iteration of Algorithm \ref{alg:init}, we must project the gradient onto the
nullspace of the active constraints.  Since the matrix $\mathbf{B}$ has $n^2$ rows, a na\"ive approach to finding the nullspace of $\mathbf{B}$ would require $O(n^5)$ operations.  In this subsection, we describe how to efficiently exploit the sparse structure of the matrix $\mathbf{B}$ to efficiently project a vector away from $\mathbf{B}$.

We begin by considering the structure of the matrix $\mathbf{B}$. At each iteration of the inner loop of Algorithm \ref{alg:init}, at least one more constraint becomes active.  Suppose that there are $l$ active constraints at the beginning of an iteration of Algorithm \ref{alg:init}, and that at the completion of this iteration, $\hat{x}_{ij}$ (where $\hat{\mathbf{X}} = \mathbf{UY}$) becomes $\pm 1$. 
In this case, we append the following row to
$\mathbf{B}$:
\begin{equation*}
\mathbf{b}^{(l)} = \left[ \mathbf{0}_{i*n} \quad  {y^{(j)}}^{\intercal} \quad
\mathbf{0}_{(n-i-1)*n} \right].
\end{equation*}
That is, the $l$th row of $\mathbf{B}$ will consist of $i*n$ zeros, followed by
the $j$th column of $\mathbf{y}$, with zeros in the remaining entries.  If multiple constraints become active in a single iteration, we simply add additional rows to $\mathbf{B}$ in the same manner.  
From this discussion, it is clear that the rows of $\mathbf{B}$ can be ordered to obtain a block diagonal structure.  At most, $\mathbf{B}$ will consist of $n$ blocks that are $l \times n$ large.  In each iteration of Algorithm \ref{alg:init}, we can easily insert each row of $\mathbf{B}$ appropriately to maintain this block structure.

Given this block matrix, an efficient procedure to find the component of $\Delta$ that lies in $\text{null}(\mathbf{B})$ is described in Algorithm \ref{alg:basis}.  The algorithm first finds an orthonomoral basis for the subspace $\text{span}(\mathbf{B})$.  The block structure of $\mathbf{B}$ means that many of its rows are already orthogonal, this means the matrix $\mathbf{C}=\mathbf{BB}^\intercal$ is nearly diagonal.  Given a basis for $\text{span}(\mathbf{B})$, we obtain the desired output by projecting $\text{vec}\left(\Delta\right)$ away from each basis vector.

We now consider the runtime of this procedure.
The matrix $\mathbf{C}$ need not be updated from scratch at each step and can be computed block-wise.  We also note that computing $\mathbf{C}$ gives all inner products needed in line 4 of Algorithm \ref{alg:basis}, and each inner product requires $O(n)$ operations. At most we will need to compute $n^2$ of these inner products per call to Algorithm \ref{alg:basis}.  Similarly, for the loop in lines 13-15, the matrix $\mathbf{B}$ will have at most $n^2$ rows and each vector rejection operation will require $O(n)$ operations.  At worst, this process will require $O(n^3)$ operations.  However, in practice, we find that for most calls to Algorithm \ref{alg:basis}, each block in $\mathbf{C}$ has fewer than $n$ rows and $\mathbf{C}$ will only have one or two non-zero entries.  Thus we typically perform much better than this bound.

\section{Vertex Hopping}
\label{sec:vertex}
\subsection{Graph of Vertices}
\label{sub:graph}
We know that solutions of the blind decoding problem will only occur on vertices and thus we attempt solve the blind decoding problem by searching these vertices.  We note that there are at most $2^{n^2}$ possible vertices $\mathbf{UY}$ where $\mathbf{U} \in \mathbb{R} ^ {n \times n}$ and $\mathbf{Y} \in \mathbb{R} ^ {n \times k}$.  
Note that if $k > n$ strictly, then a vertex is determined entirely by any $n$ linearly independent columns of $\mathbf{Y}$. Additionally, if $k > n$, then some of the $2^{n^2}$ vertices may become infeasible.

Our search process, summarized in Algorithm {\ref{alg:main}}, begins with a vertex determined by Algorithm {\ref{alg:init}}.  Given such a point we can restrict ourselves to performing optimization on a graph that represents the vertices of the feasible region.
In this graph, there are at most $2^{n^2}$ nodes corresponding to vertices of the feasible region as described above.  
An edge exists between two nodes if their corresponding values of $\mathbf{UY}$ differ by Hamming distance $1$.
Thus, our graph contains up to $2^{n^2}$ nodes and $n^2 \cdot 2^{n^2}$ edges.

Specifically, given a value of $\mathbf{U}$ obtained by the output of Algorithm \ref{alg:init}, we attempt to solve the blind decoding problem by solving the following program
\begin{align}
\text{maximize} \quad & \log |\det \mathbf{U}| \label{eq:obj2}\\
\text{s.t.} \quad & \mathbf{UY}_g = \pm 1 \label{eq:const2} \\
				  & \|\mathbf{UY}_b\|_\infty \leq 1. \nonumber 
\end{align}
This is a non-linear mixed-integer program and we will attempt to optimize it as such.
We do so by flipping the signs of individual entries of $\mathbf{UY}_g$ in an attempt to hillclimb towards an optimal value of $\mathbf{U}$ while allowing for backtracking when we reach a local optimum. Before we discuss our approach to solving it, we give state several important facts about the program.
\begin{claim}
All global optima of \pref{eq:obj2}--\pref{eq:const2} are also global optima of \refprog. 
\end{claim}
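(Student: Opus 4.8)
The plan is to show that the feasible region of \pref{eq:obj2}--\pref{eq:const2} is a subset of the feasible region of \refprogn (restricted to the noiseless case $\sigma = 0$), and that the two problems share the same objective, so that the global maximum of the smaller program can only be attained at a point that is also globally optimal for the larger one — provided we can certify that at least one global optimum of \refprog lies inside the smaller feasible set.

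First I would observe that any $\mathbf{U}$ feasible for \pref{eq:obj2}--\pref{eq:const2} satisfies $\mathbf{UY}_g \in \{-1,+1\}^{n \times |g|}$ and $\|\mathbf{UY}_b\|_\infty \le 1$; since the columns of $\mathbf{Y}$ are exactly the columns of $\mathbf{Y}_g$ together with the columns of $\mathbf{Y}_b$, this is precisely the statement $\|\mathbf{Uy}_i\|_\infty \le 1$ for all $i = 1,\dots,k$, i.e. $\mathbf{U}$ is feasible for \refprog. Hence the feasible set of \pref{eq:obj2}--\pref{eq:const2} is contained in that of \refprog, and because both programs maximize the identical function $\log|\det \mathbf{U}|$, we get $\mathrm{OPT}\pref{eq:obj2}\text{--}\pref{eq:const2} \le \mathrm{OPT}\refprogn$.

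The reverse inequality — and hence equality, which forces every maximizer of the restricted program to be a maximizer of the full one — is the substantive step. Here I would use the facts recalled in Section \ref{sec:old_alg}: when $\mathbf{X}$ has the maximal subset property, the global optima of \refprog are exactly the solutions of the blind decoding problem, i.e. the matrices $\mathbf{U}$ with $\mathbf{U} = \mathbf{T}\mathbf{A}^{-1}$ for some ATM $\mathbf{T} \in \mathcal{T}$. For such a $\mathbf{U}$ we have $\mathbf{UY} = \mathbf{T}\mathbf{A}^{-1}\mathbf{A}\mathbf{X} = \mathbf{T}\mathbf{X}$, which lies in $\{-1,+1\}^{n \times k}$ since an ATM only permutes and negates rows. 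In particular $\mathbf{UY}_g \in \{-1,+1\}$ and $\|\mathbf{UY}_b\|_\infty \le 1$, so this global optimum of \refprog is feasible for \pref{eq:obj2}--\pref{eq:const2}; therefore $\mathrm{OPT}\pref{eq:obj2}\text{--}\pref{eq:const2} \ge \mathrm{OPT}\refprogn$, and combined with the previous paragraph the two optimal values coincide. Any maximizer of \pref{eq:obj2}--\pref{eq:const2} then attains the common optimal value while being feasible for \refprog, hence is a global optimum of \refprog as well.

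The main obstacle is that the claim as literally stated carries an implicit hypothesis: the equivalence genuinely needs the global optima of \refprog to be nonvacuous and to include a blind-decoding solution, which in turn requires the MSP-type condition on $\mathbf{X}$ (and, strictly, that $\mathbf{Y}_g$ be full rank so that the constraint $\mathbf{UY}_g = \pm 1$ is not degenerate). I would make this standing assumption explicit at the start of the proof, citing the corresponding facts from \cite{dean2018blind} quoted in Section \ref{sec:old_alg}. A secondary, purely bookkeeping point to check is that the integrality constraint $\mathbf{UY}_g = \pm 1$ is consistent with — not stricter than — what a global optimum of \refprog already satisfies on those columns; this follows from the last bullet of Section \ref{sec:old_alg}, namely that solutions to the blind decoding problem lie on vertices, so that $\mathbf{UY} \in \{-1,+1\}^{n\times k}$ at every such optimum. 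With those two observations in place the argument is the short sandwich inequality above.
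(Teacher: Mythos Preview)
Your proof is correct and follows essentially the same sandwich argument as the paper: feasible-set inclusion gives $\mathrm{OPT}_{\text{small}} \le \mathrm{OPT}_{\text{large}}$, and exhibiting a global optimum of \refprog with $\mathbf{UY}\in\{-1,+1\}^{n\times k}$ (hence feasible in \pref{eq:obj2}--\pref{eq:const2}) gives the reverse inequality. The only minor difference is that the paper invokes \cite[Lemma~3]{dean2018blind} directly for the existence of such a vertex optimum, whereas you route through the MSP hypothesis and the explicit form $\mathbf{U}=\mathbf{T}\mathbf{A}^{-1}$; your version is slightly more explicit about the standing assumptions, but the underlying argument is the same.
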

\begin{proof}
We know that \refprog contains global maxima such that $\mathbf{UY} \in \{-1, +1\}^{n \times k}$ \cite[Lemma 3]{dean2018blind}.
Any $\mathbf{U}$ such that $\mathbf{UY} \in \{-1, +1\}^{n \times k}$ will clearly be feasible in both \refprog and \pref{eq:obj2}--\pref{eq:const2}. 
So there must be a $\mathbf{U}$ that maximizes \refprog that also maximizes \pref{eq:obj2}--\pref{eq:const2}.  
Further, any $\mathbf{U}$ that is feasible in \pref{eq:obj2}--\pref{eq:const2} is clearly also feasible in \refprog. 
So all $\mathbf{U}$ that maximize \pref{eq:obj2}--\pref{eq:const2} will also maximize \refprog.
\end{proof}

We note that the converse of this statement is not true: when $n$ is such that no Hadamard matrices exist, \refprog may contain (non-strict) optima that are not feasible in \pref{eq:obj2}--\pref{eq:const2}, see \cite{dean2018blind}.
We also note that we have not ruled out the possibility that the additional
constraints imposed in \pref{eq:obj2}--\pref{eq:const2} introduce local optima that
are not present in \refprog.
Empirically, we have not observed such optima for $n < 6$.
Further, for $n \geq 6$, the fraction of local optima that are encountered while optimizing
\pref{eq:obj2}--\pref{eq:const2} is consistent with the fraction of optima that
are local in \refprog. 
This suggests that either such spurious optima  do not exist or are insignificant in number.
However, we defer on obtaining analytic results
supporting this claim.

\blockcomment{
		\begin{claim}
		\label{claim:local}
		The program given by \pref{eq:obj2}--\pref{eq:const2} may contain local optima that are not optima of \refprog.
		\end{claim}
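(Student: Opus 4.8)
The statement is an existence claim, so the natural route is to exhibit an explicit witness: a dimension $n$, a sample matrix $\mathbf{Y} = [\,\mathbf{Y}_g \mid \mathbf{Y}_b\,]$ with $\mathbf{Y}_g$ full rank, and a feasible point $\mathbf{U}^\star$ of \pref{eq:obj2}--\pref{eq:const2} that is a local optimum of that program but not a local optimum of \refprog. The plan is to engineer the witness around the single structural difference between the two programs: in \refprog every column of $\mathbf{U}\mathbf{Y}$ admits a unit-Hamming ``flip'' move, whereas in \pref{eq:obj2}--\pref{eq:const2} only the entries of $\mathbf{U}\mathbf{Y}_g$ may be flipped while the columns of $\mathbf{Y}_b$ are merely required to stay inside the box. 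So I would search for a vertex $\mathbf{U}^\star$ --- so that $\mathbf{U}^\star\mathbf{Y}\in\{-1,+1\}^{n\times k}$, which by the argument in the preceding claim makes it feasible in both programs --- for which (i) every unit-Hamming neighbor obtained by flipping an entry of $\mathbf{U}^\star\mathbf{Y}_g$ either violates $\|\mathbf{U}\mathbf{Y}_b\|_\infty\le 1$ or has no larger determinant magnitude, yet (ii) some unit-Hamming neighbor obtained by flipping an entry of a column in $\mathbf{Y}_b$ strictly increases $|\det\mathbf{U}|$.

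First I would restrict to $n\ge 6$: for $n<6$ all optima of \refprog are already global (Section \ref{sec:old_alg}), so the local-but-not-global phenomenon can only occur there, and the surrounding text already reports that no spurious optima have been observed below $n=6$. Next I would set up the determinant bookkeeping: by multilinearity of $\det$ in the rows of $\mathbf{U}$ together with the gradient formula \pref{eq:gradient}, the change in $\log|\det\mathbf{U}|$ under a single sign flip of $(\mathbf{U}\mathbf{Y})_{ij}$ has a closed form in $\mathbf{U}^{-\intercal}$ and $\mathbf{y}_j$; this converts conditions (i) and (ii) into a finite system of inequalities in the entries of $\mathbf{Y}$. I would then try to realize that system either by a direct small construction --- for instance taking $\mathbf{Y}_g$ to be the columns of a determinant-maximizing $\pm 1$ matrix for $n = 6$ (perturbed into general position) together with one or two ``bad'' columns chosen so that the improving move exists in \refprog but is frozen in \pref{eq:obj2}--\pref{eq:const2} --- or, absent an elegant construction, by a computer search over $\pm 1$ matrices with the block structure described in Section \ref{sub:projection}.

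The main obstacle is that this witness appears to be genuinely elusive, which is precisely why the surrounding text reports only that such optima have not been observed and defers analytic claims. The difficulty is structural: deleting the bad columns from the move set removes edges from the vertex graph but does not obviously create a new plateau, because any improving bad-column flip in \refprog leads to another vertex $\mathbf{U}'$ with $\mathbf{U}'\mathbf{Y}\in\{-1,+1\}^{n\times k}$, and for $\mathbf{U}^\star$ to remain a local optimum of \pref{eq:obj2}--\pref{eq:const2} one must certify that $\mathbf{U}'$ is unreachable by \emph{any} non-decreasing sequence of good-column flips --- a global condition on the restricted vertex graph, not a local one. The honest version of the plan therefore forks: either push the inequality system above until a valid example at $n = 6$ (or slightly larger) emerges, or, if that system turns out to be infeasible, replace the claim by its negation and show instead that every local optimum of \pref{eq:obj2}--\pref{eq:const2} is a local optimum of \refprog, presumably by proving the $\mathbf{Y}_b$ constraints are never binding at an optimum of the restricted program. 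Given the empirical evidence quoted in the paper, I expect the second fork to be the likelier outcome.
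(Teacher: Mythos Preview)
Your proposal contains a genuine gap, both in the choice of dimension and in the structural ansatz for the witness.

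The restriction to $n\ge 6$ is backwards. You argue that since \refprog has only global optima for $n<6$, the phenomenon must live at $n\ge 6$; but the claim concerns local optima of the \emph{restricted} program \pref{eq:obj2}--\pref{eq:const2}, not of \refprogn. Precisely because \refprog has no non-global local optima at $n\le 5$, that regime is where a witness is \emph{easiest} to certify: exhibit any $\mathbf{U}^\star$ that is a local optimum of \pref{eq:obj2}--\pref{eq:const2} yet not a global optimum of \refprogn, and the $n\le 5$ structure immediately forces $\mathbf{U}^\star$ to be a non-optimum of \refprogn. The paper's proof does exactly this: it writes down an explicit $\mathbf{X}$ and $\mathbf{U}^\star$ at $n=5$, $k=9$, verifies by direct inspection that every single-entry flip of $\mathbf{U}^\star\mathbf{Y}_g$ either renders $\mathbf{U}^\star\mathbf{Y}_b$ infeasible, makes $\mathbf{U}^\star$ singular, or leaves $|\det\mathbf{U}^\star|$ unchanged, and then invokes the $n=5$ no-local-optima result of \cite{perlstein2018n5case} to finish.

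Your structural ansatz $\mathbf{U}^\star\mathbf{Y}\in\{-1,+1\}^{n\times k}$ is also too strong and is not what the paper uses. Feasibility in both programs needs only $\mathbf{U}^\star\mathbf{Y}_g\in\{-1,+1\}$ together with $\|\mathbf{U}^\star\mathbf{Y}_b\|_\infty\le 1$; in the paper's witness an entire row of $\mathbf{U}^\star\mathbf{Y}_b$ sits at zero. Allowing the bad columns to lie strictly inside the box is what creates the trap: a $\mathbf{Y}_g$-flip pushes a $\mathbf{Y}_b$ entry past $\pm 1$ and is blocked in \pref{eq:obj2}--\pref{eq:const2}. Your condition (ii), phrased as a $\mathbf{Y}_b$-entry flip, presupposes the over-strong ansatz and is never needed---the $n=5$ lemma supplies non-optimality in \refprog for free, so there is nothing to verify on the bad columns. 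The pessimism in your final paragraph is therefore misplaced: an explicit witness exists, and it lives at $n=5$, not $n\ge 6$.
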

		\begin{proof}
		This is best illustrated with a counterexample.  Let $n=5, k=9$. Let $\mathbf{A}=\mathbf{I}$ and
		\begin{equation*}
		\mathbf{X} = 
		\begin{bmatrix}
		 -1 & -1 & -1 & -1 & -1 & -1 & -1 & -1 & -1 \\
		 -1 & -1 & -1 &  1 &  1 & -1 &  1 &  1 & -1 \\
		 -1 & -1 &  1 & -1 &  1 &  1 &  1 & -1 &  1 \\
		 -1 &  1 & -1 & -1 &  1 &  1 &  1 &  1 &  1 \\
		 -1 &  1 &  1 &  1 & -1 & -1 &  1 &  1 &  1
		\end{bmatrix}.
		\end{equation*}
		Suppose that Algorithm \ref{alg:init} returns
		\begin{equation*}
		\mathbf{U} =
		\begin{bmatrix}
		 0 &  0 &  0 & -1 &  0 \\
		 0 &  0 &  0 &  0 &  1 \\
		-2 & -1 & -1 &  0 & -1 \\
		 0 & -1 &  0 &  0 &  0 \\
		 0 &  0 &  1 &  0 &  0
		\end{bmatrix},
		\end{equation*}
		then we have 
		\begin{align*}
		\mathbf{Y}_g &= 
		\begin{bmatrix}
		  1 & 1& -1& -1 &  1 &-1  \\
		  1 &-1&  1&  1&  -1& -1 \\
		 -1 & 1&  1&  1&  -1&  1 \\
		  1 & 1&  1&  1&   1&  1 \\
		 -1 &-1& -1&  1&   -1& 1 \\
		\end{bmatrix}, \\
		\mathbf{Y}_b &= 
		\begin{bmatrix}
		0  & 0 & 0 \\ 
		-1 & -1 & -1 \\ 
		-1 & 1 & 1\\ 
		1 & 1 & 1\\
		1 & 1 & -1
		\end{bmatrix}.
		\end{align*}
		This is a local optima of \pref{eq:obj2}--\pref{eq:const2}.  One can see this changing each entry of $\mathbf{Y}_g$ individually. 
		For each entry, either $\mathbf{Y}_b$ becomes infeasible, $\mathbf{Y}_g$ (and hence the corresponding value of $\det \mathbf{U}$) becomes singular, or the value of $\det \mathbf{U}$ remains unchanged.  
		Hence, this is a local optima of \pref{eq:obj2}--\pref{eq:const2}.  This cannot be a local optima of \refprog because \refprog has no local optima for $n=5$ \cite[Lemma 10]{perlstein2018n5case}.
		\end{proof}
}


\subsection{Tableau Formation}
\label{sub:tableau} 
We use the tableau data structure, commonly used to implement the simplex algorithm, to efficiently allow us to `hop' between feasible values of $\mathbf{U}$ and flip a single entry of $\mathbf{UY}_g$.  
In this subsection, we describe how to formulate this tableau in a process that closely follows \cite{papadimitriou1998combinatorial}.
  
Given an output of Algorithm \ref{alg:init}, $\mathbf{U}$, we construct a linear program (LP) that is the first-order approximation of \pref{eq:obj2}--\pref{eq:const2}.  To do this, we simply replace the objective function with the gradient of the objective function, which in this case is $\mathbf{U}^{-\intercal}$.  We do not fully solve this LP as the objective function must be updated after each simplex hop. In order to form a traditional simplex tableau, the problem must be expressed in \emph{standard form}.  An LP in standard form optimizes a linear functional over non-negative vector $\mathbf{x}$ subject to a series of equality constraints, explicitly, for some $\mathbf{b}, \mathbf{c} \in \mathbb{R}^n$ and $\mathbf{A} \in \mathbb{R}^{n \times n}$, we aim to find $\mathbf{x} \in \mathbb{R}^n$ according to
\begin{align*}
\underset{\mathbf{x}}{\text{max}} \quad  & \mathbf{c}^\intercal \mathbf{x} \\
\text{s.t.} \quad & \mathbf{Ax} = \mathbf{b} \\
          		  & \mathbf{x} \geq 0.
\end{align*}
In order to express the constraints in \pref{eq:const2} in standard form, we must replace each entry $u_{i,j}$ with a pair of variables constrained such that $u_{i,j} = x_{i'} -x_{i''}$ and $x_i \in \mathbb{R}_+$ for all $i$.  The $\ell_\infty$ constraints in {\pref{eq:const2}} can be replaced by a series of inequality constraints as described in Section {\ref{sub:initial}}.  Each of these inequality constraints can further be replaced by a single equality constraint with the addition of a slack variable;
that is, a constraint in the form $\sum_i a^{(i)} x_i \leq b_i$ becomes $\sum_i a_i x_i + x_j = b_i$ for some $x_j \geq 0$.  If $\mathbf{Y}_g$ has $l$ columns, then we are left with a set of $2nk$ linear equations with $2n(n+l)$ variables.

The final step in forming a tableau is to solve for all $x$ variables to obtain a suitable BFS.
First, for each $x_{i'}$ and $x_{i''}$ such that $u_{i,j} = x_{i'} -x_{i''}$, we must set exactly one $x$ variable to zero such that the remaining $x$ variable is positive (that is, if $u_{i,j}$ is positive, then $x_{i'} = u_{i,j}$ and $x_{i''}=0$).
We then perform Gauss-Jordan elimination on the remaining linear equations in order to solve for the remaining $x$ variables. 
We note that the sparse structure of the tableau can be exploited to efficiently perform this elimination, see \cite{papadimitriou1998combinatorial} for a more complete description of this process.

\subsection{Searching the Feasible Region}
\label{sub:pivot}
Hopping to an adjacent vertex of the feasible region involves flipping exactly one entry of $\mathbf{UY}_g$ between its two possible extrema, namely $\{-1, +1\}$.  
The simplex tableau allows us to perform the necessary calculations efficiently; changing an entry of $\mathbf{UY}$ affects only a single row of $\mathbf{U}$, implying that each pivot involves manipulating only a small subset of the linear equations and variables that make up the full tableau. 
Indeed, computing the corresponding change in $\mathbf{U}$ for each neighbor can be done in only $O(n)$ operations.
We defer to existing literature on the simplex algorithm (for example \cite{papadimitriou1998combinatorial}) for a description of the pivoting process.  
Instead, in this subsection we focus on the process of efficiently selecting an appropriate vertex to pivot to at each step of the depth-first search.

Unlike ordinary simplex, the process of selecting a pivot is slightly more involved in our algorithm.
Each vertex has at most $n^2$ feasible neighbors.  
The structure of the tableau allows us to easily determine the change that hopping to each of these vertices will induce on $\mathbf{U}$.  Suppose we update the $i$th row of $\mathbf{U}$ such that $\mathbf{u}^{(i)} = \mathbf{u}^{(i)} + \Delta$ for some $\Delta \in \mathbb{R}^{1 \times n}$, meaning $\mathbf{U} = \mathbf{U} + \mathbf{e}_i \Delta$.  We can use the matrix determinant lemma (\cite{ding2007eigenvalues}) to compute the corresponding change in the objective function as
\begin{align}
\log | \det \left( \mathbf{U} + \mathbf{e}_i \Delta \right) | &=
\log | \left(1+\Delta \mathbf{U}^{-1} \mathbf{e}_i \right)
\det \left( \mathbf{U} \right) | \nonumber \\
&= \log | \left(1+\Delta \mathbf{U}^{-1} \mathbf{e}_i \right) | + \log | \det \left( \mathbf{U} \right) | \nonumber \\
\label{eq:neighbor}
\end{align}

Because the logarithm function is monotonic, we can easily predict the change this hop will impose on the objective function by simply considering the value of $\left| 1 + \Delta \mathbf{U}^{-1} \mathbf{e}_i \right|$.  This requires only $O(n)$ operations.
The process of inspecting all $n^2$ neighbors and subsequently pivoting to an optimal choice can be accomplished in only $O(n^3)$ operations.
Moreover, because each pivot imposes a rank-one change in $\mathbf{U}$,
$\mathbf{U}^{-1}$ can be efficiently updated via the Sherman-Morrison inverse
formula \cite{sherman1950adjustment} after each hop, meaning $\mathbf{U}^{-1}$
need not be computed from scratch each step.

During the vertex hopping process, we select the vertex in the direction of the maximal positive gradient that has not already been visited, continuing in a depth-first-search manner until a global optimum has been located or a preset hop limit has been exhausted. 
For many values of $n$, there may exist local optima;
for this reason, we allow the solver to hop to neighboring vertices where the 
objective function is equal or lesser in value.
For most values of $n$, we can detect when we have reached a global optimum, allowing the solver to terminate without enumerating all non-singular vertices of the feasible region.  Our early termination criteria are discussed in Section \ref{sub:stop}. 

In order to efficiently backtrack when we reach a local optimum,
we keep a complete snapshot of the tableau at each of the previously visited vertices.  
While we do allow the solver to visit previously unvisited vertices that
decrease the objective function, we do not allow the solver to traverse vertices
where $\det \mathbf{U} = 0$.
This is because the gradient of the objective function is undefined at these
vertices.
While this may restrict the search space of the solver, this is not frequently 
an issue.
We discuss cases where this becomes an issue in Section \ref{sub:trap}.

Finally, we note that considering other criteria to select the next vertex in
our search is a topic of future research.
For example, it may be possible to further optimize the vertex hopping process by simply 
hopping to the first neighbor encountered that increases the objective function rather than 
exhaustively checking all $n^2$ neighbors. 
Additionally, when the fraction of `bad' columns is large, one may consider
positively weighting vertices which cause more columns to be `good'.  
For $n \leq 5$, our current vertex hopping strategy rarely requires more than
one or two hops to reach a global optima. 
However, as $n$ grows, for certain values of $k$ the majority of the time 
to obtain the solution is consumed by the vertex hopping process. 
Thus, improved vertex selection criteria may be necessary to further reduce the
runtime of our algorithm for $n > 5$.

\begin{algorithm}
 \caption{Simplex with Backtracking}
 \label{alg:main}
 \begin{algorithmic}[1]
 \renewcommand{\algorithmicrequire}{\textbf{Input:}}
 \renewcommand{\algorithmicensure}{\textbf{Output:}}
 \REQUIRE $\mathbf{U}$ such that $\mathbf{UY}_g$ is full rank.
 \ENSURE  $\hat{\mathbf{X}}$ or error.
  \STATE form simplex tableau
  \REPEAT
  \STATE compute objective function at all neighboring vertices \change{using the matrix determinant lemma}
  \IF {optimum reached}
  \IF {global optimum}
  \RETURN $\hat{\mathbf{X}}=\mathbf{UY}$
  \ELSE
    \STATE backtrack to first vertex with unvisited neighbor
  \ENDIF
  \ENDIF
  \STATE pivot to the largest feasible unvisited neighbor
  \STATE update $\mathbf{U}^{-1}$ \change{using the Sherman-Morrison inverse formula}
 \UNTIL {state space exhausted}
 \RETURN error
 \end{algorithmic} 
 \end{algorithm}
 
 \subsection{Stopping Criteria}
 \label{sub:stop}
Since we do not know $\mathbf{A}$, nor the value of $|\det \mathbf{A}|$, the value of $\det \mathbf{U}$ alone does not tell us whether \refprog has obtained a global optimum.  
However, the determinant of any square matrix with values $\pm 1$ can only take on a discrete set of values (see \cite{orrick_spectrum} or\cite{macwilliams2006theory}).  As a result, the only values that $|\det \mathbf{U}|$ can take is simply this spectrum scaled by some unknown constant, namely $\det \mathbf{A}^{-1}$. 
In other words, \refprog is maximized when the following quantity obtains the maximum determinant for any $\pm 1$-valued matrix of dimension $n$:
\begin{align}
\max_{\tilde{\mathbf{Y}}\in \mathbb{R}^{n \times n}} \quad &|\det \mathbf{U}\tilde{\mathbf{Y}}|
\label{eq:obj3}\\
\text{s.t.} \quad &\text{cols}(\tilde{\mathbf{Y}}) \subseteq \text{cols}(\mathbf{Y}). \label{eq:const3}
\end{align}

Solving this problem seems like a difficult combinatorial optimization task so we do not rely on solving it directly.
However, when we are at an optimum, we can often determine the value of \pref{eq:obj3} by inspecting the value of the objective function at neighboring vertices.
Indeed, for certain values of $n$, we may uniquely determine when we have reached a global optimum by considering the relative change between the value of the objective function at the optimum and at its $n^2$ neighbors.

In this subsection, we provide necessary and sufficient conditions to determine when Algorithm \ref{alg:main} has reached a global optima, for $n \in \{2,\ldots, 6, 8, 10, 12\}$.  
In Appendix \ref{apx:stop}, we prove that these conditions are sufficient up to $n=6$; the conditions given for $n=8, 10,$ and 12 are conjectured to be sufficient based on empirical results.

As stated in Section \ref{sub:graph}, each vertex may have up to $n^2$ neighbors, although not all these neighbors may be feasible.  Regardless of whether or not these vertices are feasible, we may still compute the objective function at these neighboring values using the process outlined in Section \ref{sub:pivot}.  The following criteria, describing the value of the objective function at all $n^2$ neighboring vertices, may be applied to uniquely determine when Algorithm \ref{alg:main} has reached a global optimum.

\begin{itemize}
\item For $n\leq 5$, there are no local optima.  
In these cases, if all $n^2$ neighbors of $\mathbf{UY}$ decrease the value of the objective function, then $\mathbf{UY}$ must be a global optimum.
\item For $n = 6$, at any global optimum, the value of the objective function at all 36 neighbors will take on exactly 3 distinct values.  
Further, only global optima have neighbors such that the value of the objective function decreases to a fraction of 4/5 of the optimal value.
\item For $n = 8$, if all 64 neighbors of $\mathbf{UY}$ decrease the value of the objective function to a fraction of 3/4 of the optimal value, then $\mathbf{UY}$ must be a global optimum.
\item For $n=10$, 20 neighbors of $UY$ must decrease the value of the objective function by a fraction of 1/3 the optimal value while the remaining 80 neighbors must decrease the value by 1/6.
\item For $n=12$, all 144 neighbors of $UY$ must decrease the objective function by 1/6 of the optimal value.
\end{itemize}
 
 \subsection{Causes of Failure}
\label{sub:trap}
In this subsection, we consider when Algorithms {\ref{alg:init}} (Vertex Finding) and
{\ref{alg:main}} (Simplex with Backtracking) fail to properly
terminate.  Provided that these algorithms are given proper inputs, the
probability that they succeed, for typical values of $n$ and $k$, is high. In
cases where we do encounter a failure, we find that rerunning the algorithm with
a new starting point ($\mathbf{U}_0)$ is often sufficient to recover from this
failure.  
For both Algorithms \ref{alg:init} and
\ref{alg:main}, we provide the average number of calls required to obtain a solution to the blind decoding problem for various values of $n$ and $k$ in Table \ref{tab:failure}, along
with the average runtime of each algorithm.  We outline the reasons both Algorithms \ref{alg:init} and \ref{alg:main} fail in the remainder
of this subsection.

Although not depicted in Table \ref{tab:failure}, we note that the number of calls to each subroutine is highly dependent the value of $k$.
This is evidenced in Figure \ref{fig:speedbyk}, 
which plots the average runtime of our solver, normalized by $k$,
for $n=6,8,10,$ and $12$.  As previously discussed, an optimistic estimate of the runtime of our
algorithm is $O(n^4 k)$.  
If our algorithm performed near this estimated
runtime, we would expect these lines would be constant, i.e. that the runtime
would scale linearly with $k$.  
For $n \geq 6$ and small values of $k$, our algorithm does not appear to perform near the best-case complexity. 
This is because, for large $n$ and small $k$, the output of Algorithm \ref{alg:init} is often insufficient to produce a $\mathbf{Y}_g$ that is full rank
and thus we require many calls to this subroutine to produce a suitable $\mathbf{Y}_g$.
We further note that in Figure {\ref{fig:speedbyk}}, at large $k$, the runtime appears to grow approximately quadratically by $k$ rather than linearly as predicted by our best-case analysis.  
This is because our current implementation does not fully exploit the sparsity of the tableau matrix; doing so would in fact imposes a performance penalty for small $k$.
We now discuss the cases in which our subroutines fail in greater detail. 

\blockcomment{
\begin{table}
\caption{Runtime and average calls required per solution for Algorithms 1 (Vertex Finding) and 3 (Simplex with Backtracking)}
\label{tab:failure}
\centering
\scalebox{0.85}{
\begin{tabular}{|c|c||c|c||c|c|}
\hline
 \multicolumn{2}{|c||}{} & \multicolumn{2}{c||}{Vertex Finding} &
\multicolumn{2}{c|}{Simplex with Backtracking}\\
\hline
n & k 	& Avg. Calls & Time/Call (s) & Avg. Calls & Time/Call (s) \\
\hline
2 & 8 	& 1.01 & 5.33e-6 & 1.00 & 1.29e-5 \\
3 & 13	& 1.01 & 1.89e-5 & 1.00 & 4.57e-5 \\
4 & 18	& 1.03 & 5.68e-5 & 1.00 & 1.15e-4 \\
5 & 18	& 1.28 & 9.78e-5 & 1.06 & 1.63e-4 \\
6 & 22	& 1.75 & 2.31e-4 & 1.20 & 3.69e-4 \\
8 & 30	& 4.47 & 6.11e-4 & 1.41 & 1.18e-3 \\
\hline
\end{tabular}
}
\end{table}
}

 \begin{table}
\caption{Runtime and average calls required per solution for Algorithms \ref{alg:init} (Vertex Finding) and \ref{alg:main} (Simplex with Backtracking)}
\vspace{2mm}

\label{tab:failure}
\centering
\ra{1.3}
\scalebox{0.76}{
\begin{tabular}{@{}llllllllll@{}}
\toprule
 \multicolumn{2}{c}{} &&& \multicolumn{2}{c}{Vertex Finding} &&&
\multicolumn{2}{c}{Simplex with Backtracking}\\
\cmidrule{5-6} \cmidrule{9-10}
n & k&&& Avg. Calls & Time/Call (s) &&& Avg. Calls & Time/Call (s) \\
\midrule
2 & 8 	&&& 1.01 & 5.33e-6 &&& 1.00 & 1.29e-5 \\
3 & 13	&&& 1.01 & 1.89e-5 &&& 1.00 & 4.57e-5 \\
4 & 18	&&& 1.03 & 5.68e-5 &&& 1.00 & 1.15e-4 \\
5 & 18	&&& 1.28 & 9.78e-5 &&& 1.06 & 1.63e-4 \\
6 & 22	&&& 1.75 & 2.31e-4 &&& 1.20 & 3.69e-4 \\
8 & 30	&&& 4.47 & 6.11e-4 &&& 1.41 & 1.18e-3 \\
\bottomrule
\end{tabular}
}
\end{table}

\begin{figure}
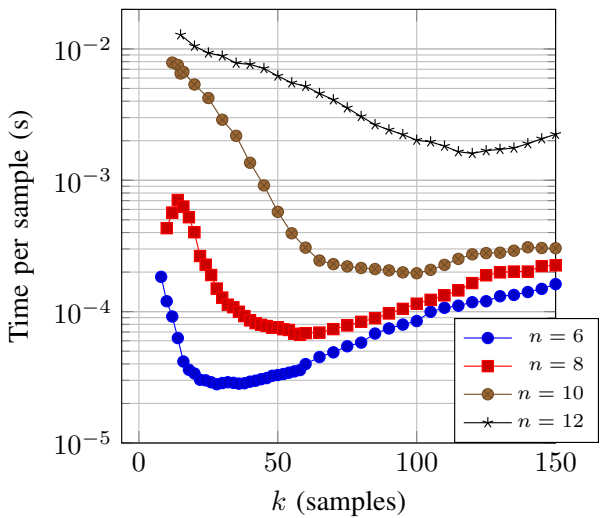

\centering
\includestandalone[width=0.9\linewidth]{figures/8_by_k}
\caption{The average time to solution of our algorithm normalized by $k$, the
number of samples. For small values of $k$, the runtime is dominated by failures of Algorithm 1.  For large $k$, the runtime is dominated by the vertex hopping process.}
\label{fig:speedbyk}
\end{figure}

\begin{figure}
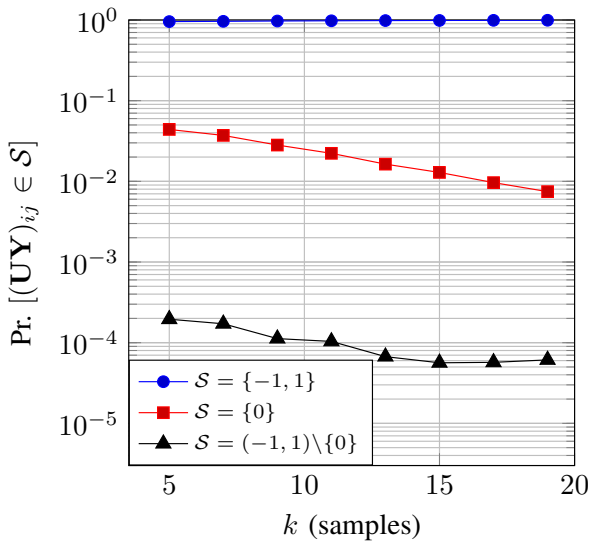

\centering
\includestandalone[width=0.9\linewidth]{figures/dynamic_empirical}
\caption{Empirical distribution of the entries of $\mathbf{UY}$, which are in either $\{\pm 1\}, \{0\}$, or $(-1,+1)\backslash \{0\}$, at the output of Algorithm \ref{alg:init} with no noise for $n=4$ and various values of $k$.  We can see that as $k$ grows, the fraction of zero-valued entries decreases, while the fraction of entries in $(-1,+1)\backslash \{0\}$ remains roughly constant.  Data averaged over 100\,000 trials and over all $(i,j) \in [1,\ldots,n]\times[1,\ldots,k]$.
\vspace{4mm}}
\label{fig:dynamic_empirical}
\end{figure}
\subsubsection{Insufficient `Good' Columns}
In order to form a simplex tableau, we must find at least $n$ linearly
independent columns that are contained in $\{-1,+1\}^n$.  
When $n=k$ and $\mathbf{X}$ is maximal, we are guaranteed to find such a
result.
However, for $k>n$, this is no longer guaranteed.  
We find that for small $n$, as well as for all $n$ when $k$ is large, we obtain such a
result with high probability.  
However, we find that for large $n$ and small $k$, the probability that Algorithm \ref{alg:init} fails to produce a suitable output becomes non-negligible.  

Our empirical results suggest that for $n \leq 5$, Algorithm \ref{alg:init} is certainly sufficient to initialize Algorithm \ref{alg:main}.  It is an active area of research to determine whether or not
alternative initialization methods may help improve the average
runtime of our approach for larger $n$.

\subsubsection{Trap Cases}
For $n=6$ and $n=8$, it is possible to obtain a BFS that lies in a
component of the graph of vertices that does not contain a global optimum.
We refer to such cases as \emph{trap cases}.
In these cases the solver will enumerate the entire component and then exit
without returning a solution to the blind decoding problem.
For both $n=6$ and $n=8$, empirically, we see that the odds of finding such trap
cases decreases as $k$ increases. 
Roughly, this occurs because either portions of these components, or the entire
component, becomes infeasible.
 
We find that when the solver is presented a BFS that lies on a graph component
with a global optimum, for $n \leq 8$, the solver typically only requires
a very small number of hops (rarely more than 4) in order to find this solution.  
However, we note that as $n$ increases, the number of vertices of the feasible region grows exponentially fast; 
and we find that, for $n=8$, a trapped subspace may contain tens of thousands of non-singular vertices.
The default behavior of the algorithm is to enumerate this entire subspace in attempt to find a global optimum.  
As a simple heuristic to detect a trap cases, we simply limit the maximum number of vertices that the solver may visit in a single attempt.
If the solver exceeds this limit, we assume that the solver has encountered a trap case, and we proceed by returning to Algorithm \ref{alg:init} and finding a new BFS.
Empirically, we find that setting this limit to $2nk$ is sufficiently large to almost always avoid falsely detecting a trap case.  
It is an open question to determine if there exist more
intelligent methods to recover from a trap case rather than restarting the
solver from scratch.

\subsubsection{False Trap Cases}
For all values of $n$, we find that there is a small probability that Algorithm \ref{alg:main} will incorrectly terminate after enumerating the entire state space, thus presenting what appears to be a trap case.
These `false traps' are caused by numerical instability; the solver may deem a
neighboring vertex to be infeasible because the values of one or more entries of
$\mathbf{UY}$ exceeds $\pm 1$ by more than a predefined floating point
threshold.

For $n < 5$, the odds of encountering a false trap is less that one per one thousand with this probability increasing slightly as $n$ grows further.
When such an error occurs, we typically find that the channel gain matrix is poorly conditioned ($\kappa > ~10^5$).  
Surprisingly, even with such poorly conditioned channels, we can often recover from this type of error by simply obtaining a new BFS and trying again.
This behavior indeed seems difficult to avoid for extremely ill-conditioned channels. Because this behavior occurs so infrequently and only in poorly conditioned channels, we believe that simply restarting is a sufficient  solution and we do not need to consider further optimizations.

\section{Robust Decoding}
\label{sub:noise}
\subsection{Algorithm}
So far, we have only considered solving the blind decoding problem in the limit of high SNR.  We now turn our attention to how to robustly solve the blind decoding problem in the presence of AWGN.  We make no claims that the technique provided here is optimal in terms of BER performance; it almost certainly is not.  However, our technique works well empirically and is computationally efficient.  
Our treatment of this algorithm is entirely empirical; an analytic treatment of the performance of our algorithm in the presence of noise is a topic of ongoing research.

In order to understand how to adapt to noise, we consider the output of Algorithm \ref{alg:init} in the noiseless case in more detail.
Empirically, for any $(n,k)$, when Algorithm \ref{alg:init} terminates, the vast majority of entries of $\mathbf{UY}$ are contained in the set $\{-1,+1\}$.  A small proportion, on the order of 1\%, of entries are 0-valued and an even smaller proportion (roughly two out of 10\,000) appear uniformly distributed within $(-1,+1)\backslash 0$.  Empirical results describing the distribution of the entries of $\mathbf{UY}$ for $n=4$ and various values of $k$ is shown in Figure \ref{fig:dynamic_empirical}.

When AWGN is present, the behavior of Algorithm \ref{alg:init} changes drastically because the columns of $\mathbf{Y}$ no longer have such strong linear dependence. With near certainty, the output of Algorithm \ref{alg:init} will have only $n^2$ entries that are exactly $\{-1,+1\}$, and, as $k$ grows, this means that $\mathbf{Y}_g$ will almost never be full rank and we will be unable to proceed with Algorithm \ref{alg:main}.


To address this issue, we add an additional rounding step to Algorithm \ref{alg:init}.  If entries of $\mathbf{UY}$ are within some tolerance, \change{$\epsilon$}, of 
$\{-1,0,1\}$, we simply adjust $\mathbf{Y}$ to effectively round off the corresponding entry of $\mathbf{UY}$.  Explicitly, we compute each a matrix \change{$\Sigma$} where the $ij$th entry is given by
\begin{align}
\Sigma_{ij} =
\begin{cases}
(\mathbf{UY})_{ij} + 1 & |(\mathbf{UY})_{ij} + 1| < \epsilon \\
(\mathbf{UY})_{ij} - 1 & |(\mathbf{UY})_{ij} - 1| < \epsilon\\
(\mathbf{UY}) & |(\mathbf{UY})_{ij}| < \epsilon \\
0 & \text{otherwise.}
\end{cases} 
\label{eq:center}
\end{align}
$\mathbf{Y}$ is then updated by computing $\mathbf{Y} = \mathbf{Y} - \mathbf{U}^{-1} \Sigma$.  Performing this rounding step only once is typically not sufficient to find a value of $\mathbf{Y}_g$ that is sufficient to construct a tableau.  Indeed, rounding only once is often insufficient when $\mathbf{X}$ has columns that are identical up to a sign.  In these cases, $\mathbf{Y}_g$ will be deficient in rank by the number of identical columns of $\mathbf{X}$.  More independent columns of $\mathbf{Y}_g$ may be found at this point by returning to the main procedure of Algorithm \ref{alg:init} after rounding.  

The full rounding process is presented in Algorithm \ref{alg:robust}. 
In practice, we find that the \texttt{for} loop contained in this algorithm will often terminate after only one or two iterations.  
Repeating the rounding process beyond $n$ times will have no effect;
after completing $n$ iterations of the process we are guaranteed that the set of active constraints will be full rank.

\begin{algorithm}
 \caption{Vertex Finding in the presence of AWGN}
 \label{alg:robust}
 \begin{algorithmic}[1]
 \renewcommand{\algorithmicrequire}{\textbf{Input:}}
 \renewcommand{\algorithmicensure}{\textbf{Output:}}
 \REQUIRE $\mathbf{Y}$
 \ENSURE  $\hat{\mathbf{Y}}$, $\mathbf{U}$ which is a BFS.
 \STATE Draw $\mathbf{U}_0$ at random as usual
 \STATE $\mathbf{Y}_0 = \mathbf{Y}$
 \FOR {$i = 0, i < n, i++$}
	\STATE $\mathbf{U}_{i+1} = 
		 	\mathtt{Algorithm\;\ref{alg:init}}
            \left(\mathbf{U}_i,\mathbf{Y}_i\right)$
    \IF {$\| \mathbf{U}_{i+1} - \mathbf{U}_i \|_\infty < \epsilon$}
    	\STATE break
    \ENDIF
    \STATE Compute $\Sigma$ as in eq. \pref{eq:center}.
    \STATE $\mathbf{Y}_{i+1} = \mathbf{Y}_i - 			\mathbf{U}_{i+1}^{-1} \Sigma$
 \ENDFOR
 \RETURN $\mathbf{U}_{i}$, $\mathbf{Y}_{i}$.
 \end{algorithmic} 
 \end{algorithm}

\subsection{Choosing $\epsilon$ and BER Performance}
\label{sub:ber}
We now turn our attention to discussing the BER performance of the vertex hopping algorithm using Algorithm {\ref{alg:robust}}.
We first consider the behavior of this algorithm in the high SNR limit.
From Figure \ref{fig:dynamic_empirical}, we see that roughly 2 out of $10\,000$ of the entries are distributed in the interval $(-1, 1)\backslash 0$.  When these entries are within $\epsilon$ of $\pm 1$ then Algorithm \ref{alg:robust} will incorrectly force these entries to $\pm 1$.  This may result in recovering an estimate of $\mathbf{A}^{-1}$ that is not equivalent up to an ATM.  This behavior explains the noise enhancement observed at high SNR in Figure {\ref{fig:awgn}}.
\change{ We note that one impractical solution to avoid this effect would be to run the gradient descent algorithm of {\cite{dean2018blind}} using the output of the vertex hopping algorithm as a starting point. However, this would greatly increase the runtime of the algorithm.  It is an open problem whether or not this phenomena can be avoided without drastically increasing the runtime of the algorithm.}


In the presence of AWGN, we observed that the BER is effectively constant for a given value of $\epsilon$.  
Instead, with a fixed value of $\epsilon$, as SNR decreases, the odds that the solver will complete decays.
This is because, if $\epsilon$ is small compared to the noise variance, the rounding procedure described above is unlikely to have any effect. 
As a result, it is unlikely that Algorithm \ref{alg:robust} will yield a value of $\mathbf{Y}$ with a full rank $\mathbf{Y}_g$.
The odds that Algorithm \ref{alg:robust} produces a suitable output as a function of SNR for various values of $\epsilon$ is presented \change{ on the left-hand side of Figure } \ref{fig:completed}. 

Unlike the trap cases discussed in Section \ref{sub:trap}, the odds of Algorithm \ref{alg:robust} succeeding is largely dependent on the input $\mathbf{Y}$ and not the initial choice of $\mathbf{U}_0$.
In other words, if the solver fails at low SNR, it is unlikely to succeed if it is run again with a different choice of $\mathbf{U}_0$.
In practice, in the low SNR regime, we run several iterations with different choices of $\mathbf{U}_0$ to account for failures due to trap cases, but if these attempts fail we must either declare an erasure or raise the value of $\epsilon$.  
This behavior allows us to choose the value of $\epsilon$ based requisite bit- and block-error rates. \change{The right-hand side of 
Figure} \ref{fig:completed} shows the average BER as a function of $\epsilon$.  These results are averaged over all SNR values ranging from 10dB to 30dB.  

\begin{figure*}
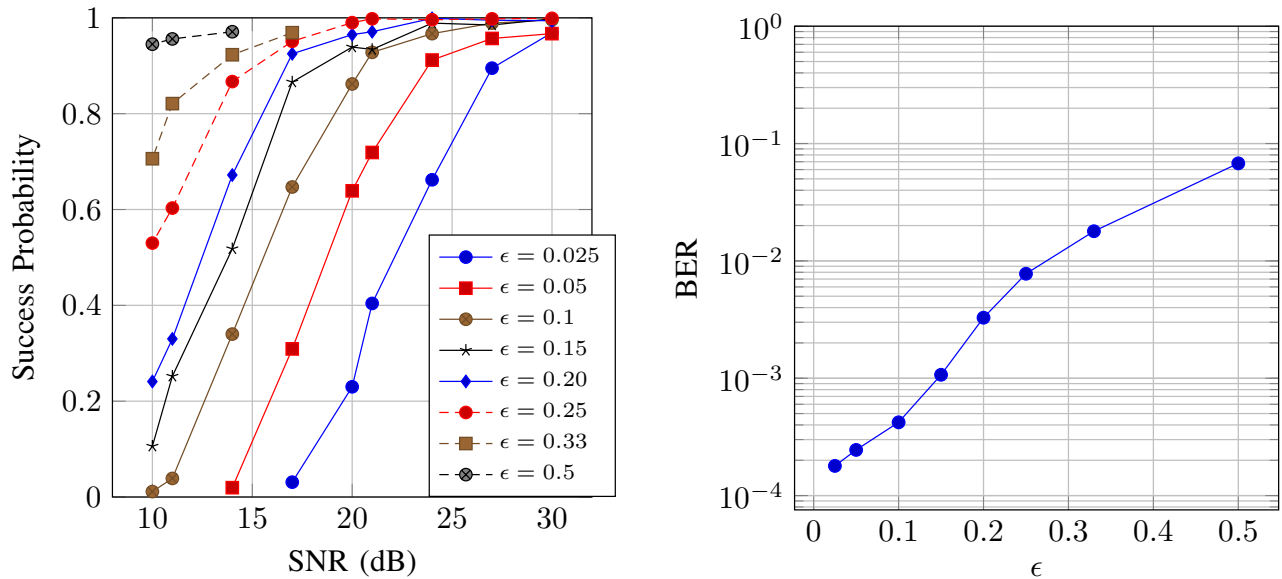

\centering
\hfill
\includestandalone[width=0.45\linewidth]{figures/completed_v_snr}
\hfill
\includestandalone[width=0.45\linewidth]{figures/ser_v_sigma}
\hspace{5mm}
\caption{On the left, the probability of the vertex hopping method completing (that is, returning a value of $\hat{\mathbf{X}}$ that may or may not be correct up to an ATM) for different values of $\epsilon$. On the right, the BER of the solver is averaged across SNR values ranging from 10dB to 30dB and plotted against $\epsilon$.  The BER of our approach is almost entirely a function of the value of $\epsilon$ and not SNR.}
\label{fig:completed}
\end{figure*}
Finally, we note that, in addition to the noise variance, the condition number of the channel gain matrix also plays a large role in both the success probability depicted in Figure \ref{fig:completed} and the BER performance.
\change{Indeed we find a very high correlation between the failure rate of our algorithm and the condition number of the channel.  
A more optimal strategy would likely entail choosing} $\epsilon$, for example, based in part on the singular values of $\mathbf{Y}$, rather than exclusively on the SNR. 
We did not explore how to exploit this relationship as computing the SVD of $\mathbf{Y}$ adds significant complexity at the receiver.  
Further, if the distribution of the channel gain matrix is such that the variance of its condition number is small over each block then such an approach would be of little benefit.

\section{Conclusion}
\label{sec:conclusion}
In this work, we have presented a `vertex hopping' algorithm that can efficiently and reliably perform blind decoding of BPSK MIMO symbols in the presence of AWGN.  
We present the problem of decoding as a non-linear mixed-integer program and leverage techniques from solving linear programming to solve it.
Our vertex hopping algorithm consists of a two-step process; we must first find a suitable vertex of the feasible region that describes the integer program.  
Once this vertex is found we hop between vertices of the feasible region in an attempt to search for a global optimum.

Empirically, we show that this technique is both efficient and reliable for MIMO systems as large as $n=8$.
Our technique still works beyond $n=8$, but the underlying non-convex optimization problem appears to become computationally difficult as $n$ grows. 
For small $n$, the vertex hopping algorithm is both efficient and practical in terms of sample and computational complexity.
At low SNR, our technique performs comparably to zero-forcing decoding and at all SNRs outperforms ML decoding with as little as 1\% CSI error.

The algorithm presented in this work motivates a suite of future research topics.  
Many of these topics are related to understanding and improving the performance of our algorithm in a variety of operating conditions. 
Examples include: developing optimizations to exploit the structure present in complex channel gain matrices, studying the performance of our algorithm under higher modulation orders, and considering rectangular channels.
Further, more research is warranted at the wireless-systems level to understand how to best use our algorithm to build high-rate, reliable MIMO systems that operate in environments with rapidly changing CSI.


\appendices

\section{Stopping Criteria}
\label{apx:stop}
In this appendix, we prove that the criterion given in Section \ref{sub:stop} are indeed necessary and sufficient to determine whether a value of $\mathbf{U}$ is at a global optimum.  
For simplicity of exposition, we simply consider the case $k=n$; the arguments contained in this section indeed still hold when $k>n$ as we can still compute the objective function of all $n^2$ neighboring vertices regardless of whether or not they are feasible.

\begin{theorem}
\label{theorem:stop} 
For $n \leq 6$, a global optimum can be detected based on the value of the objective function obtained on each of its $n^2$ neighbors.
\end{theorem}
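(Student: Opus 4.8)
The plan is to handle $n \le 5$ and $n = 6$ separately, reducing the detection question in both cases to elementary constraints on determinants of $\pm1$ matrices. First I would record what the detector actually sees. Taking $k=n$ as in the statement, let $\mathbf{M} := \mathbf{U}\mathbf{Y} \in \{-1,+1\}^{n\times n}$ at a non-singular vertex; flipping the $(i,j)$ entry of $\mathbf{M}$ is the rank-one update of \pref{eq:neighbor}, and expanding the determinant along row $i$ shows the neighbour has determinant $\det\mathbf{M}-2M_{ij}C_{ij}$, where $C_{ij}$ is the $(i,j)$ cofactor. Thus the detector observes, for each of the $n^2$ neighbours, the ratio $|\det\mathbf{M}'|/|\det\mathbf{M}| = |1-2M_{ij}C_{ij}/\det\mathbf{M}|$. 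I would then invoke two facts: (a) row- and column-expansion give $\sum_j M_{ij}C_{ij} = \sum_i M_{ij}C_{ij} = \det\mathbf{M}$; and (b) the determinant of any $m\times m$ $\pm1$ matrix is a multiple of $2^{m-1}$ with absolute value at most the maximal value $D_m$, so in particular $|C_{ij}|\le D_{n-1}$ is a multiple of $2^{n-2}$ and $|\det\mathbf{M}'|\le D_n$ is a multiple of $2^{n-1}$. Since $|\det\mathbf{U}| = |\det\mathbf{M}|/|\det\mathbf{Y}|$ with $\mathbf{Y}$ fixed, a global optimum is precisely a vertex with $|\det\mathbf{M}| = D_n$.

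For $n \le 5$ I would show that the rule ``declare $\mathbf{U}$ a global optimum iff no neighbour strictly increases the objective'' is correct; since this is a predicate of the observed neighbour values, that proves the theorem. The forward direction is immediate from $|\det\mathbf{M}'|\le D_n = |\det\mathbf{M}|$. For the converse, suppose no adjacent vertex improves the objective. Along any edge only one row of $\mathbf{U}$ varies, so $\det\mathbf{U}$ is affine there and $\log|\det\mathbf{U}|$ cannot exceed its values at the edge endpoints; hence every edge direction is non-increasing, and since the directional derivative $\langle\mathbf{U}^{-\intercal},\cdot\rangle$ (see \pref{eq:gradient}) is linear and the edge directions generate the tangent cone at $\mathbf{U}$, the point $\mathbf{U}$ is a KKT point of \refprog. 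By \cite{dean2018blind} (and \cite{perlstein2018n5case} for $n=5$), for $n<6$ all such points are global optima.

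For $n=6$, recall $D_6 = 160 = 2^5\cdot 5$ and that every $6\times6$ $\pm1$ determinant is a multiple of $32$. The converse direction is now essentially free: if $\mathbf{U}$ is a non-singular, non-optimal vertex then $|\det\mathbf{M}| = 32m$ with $m\in\{1,2,3,4\}$, every neighbour determinant is again a multiple of $32$, and a neighbour at ratio $4/5$ would have $|\det\mathbf{M}'| = 128m/5 \notin \mathbb{Z}$ because $5 \nmid m$; so no non-global vertex has a neighbour at ratio $4/5$. For the forward direction, at a global optimum $\det\mathbf{M} = \pm160$ forces (by (b) together with $|\det\mathbf{M}'|\le160$) each $M_{ij}C_{ij} \in \{0,16,32,48\}$, so the neighbour ratios lie in $\{1,4/5,3/5,2/5\}$; to see that exactly three of these occur and that $4/5$ is among them, I would use the classification of extremal $6\times6$ $\pm1$ matrices (Ehlich): the Gram matrix $\mathbf{M}\mathbf{M}^{\intercal}$ has eigenvalues $10,10,4,4,4,4$ and, as an integer matrix with constant diagonal $6$, is determined up to simultaneous permutations and sign changes of rows and columns, so it suffices to compute $\mathbf{M}^{-1} = \mathbf{M}^{\intercal}(\mathbf{M}\mathbf{M}^{\intercal})^{-1}$, hence the $36$ products $M_{ij}C_{ij}$, on a representative of each of the finitely many equivalence classes. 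Combined with the converse, ``some neighbour has objective ratio $4/5$'' is then a correct global-optimum detector for $n=6$.

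The step I expect to dominate the work is the forward direction for $n=6$: it rests on the Ehlich classification of the determinant-$160$ matrices and then a finite but fiddly computation of the cofactor matrix on the representative(s) to confirm which three ratios appear and that $4/5$ is one of them — the number-theoretic converse, by contrast, costs essentially nothing. A minor technical point is the $n\le5$ reduction from ``no improving adjacent vertex'' to ``KKT point of \refprog'', which needs the edgewise-affineness observation above.
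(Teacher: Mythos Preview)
Your proposal is correct and follows essentially the same route as the paper: for $n\le 5$ both arguments invoke the cited absence of non-global optima, and for $n=6$ both reduce the forward direction to a computation on a single representative via the weak equivalence (Ehlich classification) of maximal $\pm1$ matrices, then use that $160$ is the unique element of the $n=6$ determinant spectrum divisible by $5$ to make the neighbour ratio $4/5$ a fingerprint of maximality. Your $n=6$ converse via the bare divisibility obstruction is marginally more direct than the paper's detour through ``three distinct neighbour values,'' and your $n\le 5$ argument spells out the KKT reduction that the paper leaves implicit, but neither difference is structural.
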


This theorem follows trivially for $n \leq 5$ due to the fact that the only optima contained on the vertices of the feasible region are in fact global in these cases. 
This is proven in \cite{dean2018blind} and \cite{perlstein2018n5case}.
Before proving this theorem for $n=6$, we formalize the discussion in Section \ref{sub:stop} regarding the optima of \pref{eq:obj3}--\pref{eq:const3}.
Define the set $\mathcal{D} = \{|\det \mathbf{X}| : \mathbf{X} \in \{-1,+1\}^{n \times n} \}$, which in \cite{orrick_spectrum} is referred to as the \emph{spectrum} of possible determinants.  Since, in the noiseless case, $\det \mathbf{UY} = \det \mathbf{UAX}$, and $| \det \mathbf{X}| \in \mathcal{D}$, this implies the following claim:
\begin{claim}
\label{claim:spectrum}
The value of $|\det \mathbf{U}| = D |\det \mathbf{A}^{-1}|$ for some $D \in \mathcal{D}$.
\end{claim}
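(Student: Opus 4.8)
The plan is to prove the claim directly from the multiplicativity of the determinant together with the vertex characterization recalled in Section~\ref{sec:old_alg}: a matrix $\mathbf{U}$ sits at a vertex of the feasible region precisely when $\mathbf{U}\mathbf{Y} \in \{-1,+1\}^{n\times k}$. Since this appendix has already reduced to the case $k=n$, the matrix $\mathbf{Y}$ is square, and a non-singular vertex is exactly a matrix $\mathbf{U}$ for which $\mathbf{M} := \mathbf{U}\mathbf{Y}$ is an invertible element of $\{-1,+1\}^{n\times n}$. By the very definition of the spectrum $\mathcal{D}$, this forces $|\det \mathbf{M}| \in \mathcal{D}$, and this single observation supplies the spectrum element whose existence the claim asserts.

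First I would substitute $\mathbf{Y} = \mathbf{A}\mathbf{X}$, which holds in the noiseless case, so that $\mathbf{M} = \mathbf{U}\mathbf{A}\mathbf{X}$. Taking determinants and using multiplicativity gives $\det \mathbf{M} = \det(\mathbf{U})\,\det(\mathbf{A})\,\det(\mathbf{X})$. Both $\mathbf{M}$ and $\mathbf{X}$ are invertible here: $\mathbf{M}$ because we restrict to non-singular vertices (which is exactly what makes the objective $\log|\det \mathbf{U}|$ defined), and $\mathbf{X}$ because $\mathbf{Y}$ was assumed full rank and $\mathbf{A}$ is full rank. Solving for the target quantity then yields $|\det \mathbf{U}| = |\det \mathbf{M}|\,|\det \mathbf{A}^{-1}|\,/\,|\det \mathbf{X}|$, where $|\det \mathbf{M}| \in \mathcal{D}$ is the promised spectrum element and $|\det \mathbf{A}^{-1}|$ is the global scaling constant. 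The leftover factor $|\det \mathbf{X}|$ is itself a fixed element of $\mathcal{D}$ --- this is precisely the fact flagged by the phrase ``$|\det \mathbf{X}| \in \mathcal{D}$'' immediately preceding the claim --- and since it is identical at every vertex it can be folded into the unknown constant.

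The step requiring the most care is the bookkeeping of this scaling constant rather than any substantive argument. Because $\mathbf{Y}$ is fixed, the factor $|\det \mathbf{Y}| = |\det \mathbf{A}|\,|\det \mathbf{X}|$ is the same at every non-singular vertex, so the cleanest exact statement is that $|\det \mathbf{U}|$ ranges over $\mathcal{D}$ scaled by the single common constant $1/|\det \mathbf{Y}|$, with $\mathbf{D}\!:=\!|\det \mathbf{U}\mathbf{Y}|$; writing this constant as $|\det \mathbf{A}^{-1}|$ differs only by the fixed spectrum factor $|\det \mathbf{X}|$. I would stress that this imprecision is harmless for the use made of the claim, since the stopping criterion of Section~\ref{sub:stop} only ever compares $|\det \mathbf{U}|$ across \emph{neighboring} vertices, so the common constant (and in particular the $|\det \mathbf{X}|$ term) cancels in every ratio the algorithm actually inspects. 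I expect no genuine obstacle: once the vertex property pins $\mathbf{M} = \mathbf{U}\mathbf{Y}$ to $\{-1,+1\}^{n\times n}$, the claim is immediate, and the only thing to state carefully is the reduction to $k=n$ and the restriction to non-singular vertices that guarantees $\det \mathbf{M} \neq 0$.
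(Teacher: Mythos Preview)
Your argument is correct and matches the paper's own one-line justification preceding the claim, which simply notes that in the noiseless case $\det \mathbf{UY} = \det \mathbf{UAX}$ with $|\det \mathbf{X}| \in \mathcal{D}$; you have merely filled in the multiplicativity step and the vertex condition $\mathbf{UY}\in\{-1,+1\}^{n\times n}$ that the paper leaves implicit. Your observation that the literal statement is off by the fixed factor $|\det \mathbf{X}|$ is accurate and not addressed in the paper, but as you note it is irrelevant to the ratio-based stopping criteria that actually invoke the claim.
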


We say that a $\pm 1$-valued matrix is \emph{maximal} if it obtains the maximum determinant amongst all matrices constrained to $\{-1,+1\}^{n \times n}$.  
Suppose there are $N$ distinct maximal matrices, then we denote these matrices as $\mathbf{X}_1, \ldots, \mathbf{X}_N$.

\begin{defn}
Equivalence of matrices. Two matrices $\mathbf{X}_1$, $\mathbf{X}_2$ are weakly
equivalent if, for some $\mathbf{T}_1, \mathbf{T}_2 \in \mathcal{T}$,
$\mathbf{X_1} = \mathbf{T}_1 \mathbf{X}_2 \mathbf{T}_2$.  If $\mathbf{X}_1 =
\mathbf{T}_1 \mathbf{X}_2$, then $\mathbf{X}_1, \mathbf{X_2}$ are strongly
equivalent.
\end{defn}

Strong equivalence is the same as being equivalent up to an ATM, which only allows permutation and negation of columns. 
All solutions to the blind decoding problem are strongly equivalent.
However, matrices that are weakly equivalent to $\mathbf{X}$ may have both permuted rows and columns.  Such matrices are not necessarily solutions to the blind decoding problem and may correspond to spurious optima.

We now consider the poset formed by taking a single vertex and its $n^2$ neighbors and imposing an ordering on the graph.  Specifically, the poset $(\mathcal{X}, \leq)$ formed by taking the set, $\mathcal{X}$, containing
$\mathbf{UY}$ and its $n^2$ neighbors, and imposing the ordering where  $\mathbf{A} \leq \mathbf{B}$ implies 
$| \det \mathbf{A} | \leq |  \det \mathbf{B} |$.

\begin{lemma}
\label{lemma:isomorphic}
Suppose $n$ is such that all maximal $\pm 1$-valued matrices are weakly equivalent and that there are $N$ maximal matrices.  
Then all posets  $(\mathcal{X}_1, \leq), \ldots, (\mathcal{X}_N, \leq)$ are isomorphic. 
\end{lemma}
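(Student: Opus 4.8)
The plan is to exhibit an explicit poset isomorphism between $(\mathcal{X}_i, \leq)$ and $(\mathcal{X}_j, \leq)$ for any pair of maximal matrices, built directly from the weak equivalence of the underlying maximal matrices. First I would set up coordinates: fix a maximal $\pm 1$-valued matrix $\mathbf{X}_i$ and let $\mathbf{U}_i$ be the corresponding vertex, so that $\mathbf{U}_i \mathbf{Y} = \mathbf{X}_i$ up to the global scaling by $|\det \mathbf{A}^{-1}|$ from Claim \ref{claim:spectrum}. The $n^2$ neighbors of $\mathbf{U}_i$ are obtained by flipping a single entry $(a,b)$ of $\mathbf{X}_i$ from $\pm 1$ to $\mp 1$; by the matrix determinant lemma (as in \pref{eq:neighbor}), the determinant of the neighbor obtained by flipping entry $(a,b)$ is $|\det \mathbf{X}_i| \cdot |1 + \Delta_{ab}\, \mathbf{X}_i^{-1} \mathbf{e}_a|$ where $\Delta_{ab} = -2 x_{ab}\, \mathbf{e}_b^\intercal$, i.e. it depends only on the cofactor structure of $\mathbf{X}_i$. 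So the poset $(\mathcal{X}_i, \leq)$ is determined entirely by the multiset of absolute minors (more precisely, by the values $|\det \mathbf{X}_i - 2 x_{ab}\, \mathbf{e}_a \mathbf{e}_b^\intercal|$ over all $a,b$).

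The key step is then to show that weak equivalence preserves this multiset of neighbor-determinants, and moreover induces a bijection on neighbors that respects the order. Suppose $\mathbf{X}_j = \mathbf{T}_1 \mathbf{X}_i \mathbf{T}_2$ with $\mathbf{T}_1, \mathbf{T}_2 \in \mathcal{T}$ (signed permutation matrices). Flipping entry $(a,b)$ of $\mathbf{X}_i$ corresponds, after conjugating by $\mathbf{T}_1$ and $\mathbf{T}_2$, to flipping a unique entry $(a', b')$ of $\mathbf{X}_j$, where $(a',b')$ is the image of $(a,b)$ under the row permutation from $\mathbf{T}_1$ and column permutation from $\mathbf{T}_2$ (the sign changes in $\mathbf{T}_1, \mathbf{T}_2$ are immaterial since flipping an entry is symmetric under negation of that entry). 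Because $|\det(\cdot)|$ is invariant under left/right multiplication by signed permutation matrices, the neighbor of $\mathbf{X}_i$ at $(a,b)$ and the neighbor of $\mathbf{X}_j$ at $(a',b')$ have equal absolute determinant. This gives a bijection $\phi: \mathcal{X}_i \to \mathcal{X}_j$ (sending $\mathbf{U}_i$ itself to $\mathbf{U}_j$, and the $(a,b)$-neighbor to the $(a',b')$-neighbor) with $|\det \phi(\mathbf{Z})| = |\det \mathbf{Z}|$ for all $\mathbf{Z} \in \mathcal{X}_i$; hence $\mathbf{Z} \leq \mathbf{Z}'$ iff $\phi(\mathbf{Z}) \leq \phi(\mathbf{Z}')$, which is exactly a poset isomorphism. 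Chaining these isomorphisms through the hypothesis that \emph{all} maximal matrices are weakly equivalent yields that $(\mathcal{X}_1,\leq), \ldots, (\mathcal{X}_N, \leq)$ are pairwise isomorphic.

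I expect the main obstacle to be a careful bookkeeping point rather than a deep one: verifying that the single-entry-flip operation genuinely commutes with conjugation by signed permutation matrices, including the interaction between the sign flips in $\mathbf{T}_1, \mathbf{T}_2$ and the "$\pm 1 \to \mp 1$" flip, and that this is well-defined even when some neighbor is singular (determinant zero) or infeasible — in which case we still assign it the value $0$ or its formal determinant, and the order is still respected since $0$ is the bottom element and is preserved by $\phi$. One should also confirm that $\phi$ is a bijection on the \emph{full} set of $n^2$ neighbors (not just the feasible ones), which is immediate since row/column permutations act as bijections on the index set $[n] \times [n]$. Finally, I would remark that Lemma \ref{lemma:isomorphic} is what makes the stopping criteria of Section \ref{sub:stop} well-posed for those $n$ (namely $n \in \{2,\ldots,5\}$ trivially, and the relevant structure for $n=6$) where all maximal matrices are known to be weakly equivalent.
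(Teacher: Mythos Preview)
Your proposal is correct and follows essentially the same route as the paper: define the map $\mathbf{X} \mapsto \mathbf{T}_1 \mathbf{X} \mathbf{T}_2$ induced by the weak equivalence, observe that it is a bijection preserving $|\det(\cdot)|$, and conclude that the posets are isomorphic. Your write-up is in fact more explicit than the paper's on one point the paper leaves implicit---namely, that this map carries the $n^2$ single-entry-flip neighbors of $\mathbf{X}_i$ bijectively onto the $n^2$ single-entry-flip neighbors of $\mathbf{X}_j$ (so that $\phi$ really restricts to a map $\mathcal{X}_i \to \mathcal{X}_j$); your remark that row/column permutations act as bijections on the index set $[n]\times[n]$, and that sign flips are immaterial to the flip operation, supplies exactly that missing sentence.
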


\begin{proof}
Consider the mapping 
$\phi_{k,l} : \mathbb{R}^{n \times n} \rightarrow \mathbb{R}^{n \times n}$ given by 
$\mathbf{X}_i \mapsto \mathbf{T}_k \mathbf{X}_i \mathbf{T}_l$, for some $\mathbf{T}_k, \mathbf{T}_l \in \mathcal{T}$.  
This mapping is one-to-one since all elements of $\mathcal{T}$ are square and full rank. 
This mapping also preserves the ordering of the poset $\left(\mathcal{X}_i, \leq \right)$.
This is because permuting rows or columns of a matrix can only change the sign of its determinant, and likewise for negating its rows or columns; hence, for any $\mathbf{X}, \mathbf{T}_k, 
\mathbf{T}_l$, we have 
$| \det \mathbf{X} | = | \det \mathbf{T}_k \mathbf{X} \mathbf{T}_l |$.  
If $\mathbf{X}_j = \phi_{k,l}(\mathbf{X}_i)$, then this implies that the poset $(\mathcal{X}_i,\leq)$ is isomorphic to $(\mathcal{X}_j,\leq)$. 
Since $n$ is such that there is only one weak equivalence class of maximal determinant matrices, this implies that for any $\mathbf{X}_i$ and $\mathbf{X}_j$, there exists a $k,l$ such that $\mathbf{X}_j = \phi_{k,l}(\mathbf{X}_i)$.
Thus, the posets $(\mathcal{X}_i,\leq)$, $(\mathcal{X}_j,\leq)$ are isomorphic for all $i,j$.
\end{proof}
Up to $n=10$, and for several values beyond $n>10$, all maximal matrices are weakly equivalent (see \cite{smith1989studies}); 
Lemma \ref{lemma:isomorphic} holds for these cases.   
Beginning at $n=6$, local optima exist in \refprog.  However, the following lemma shows that the poset obtained at any global optima in this case is uniquely identifiable.
\begin{lemma}
For $n=6$, the poset $(\mathcal{X}, \leq)$ obtained at a global optimum is uniquely identifiable.
\end{lemma}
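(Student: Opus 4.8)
The plan is to reduce the statement to a short arithmetic fact about the determinant spectrum $\mathcal{D}$ of $6\times 6$ sign matrices, using Lemma~\ref{lemma:isomorphic} to dispatch the ``all global optima look alike'' half. Since at $n=6$ all maximal $\pm 1$ matrices are weakly equivalent, Lemma~\ref{lemma:isomorphic} already shows that every global optimum yields the same poset $(\mathcal{X},\le)$ up to isomorphism; call this common poset $P^{\ast}$. So it suffices to exhibit two features of $P^{\ast}$, expressed purely in terms of the determinant values occurring on $\mathbf{U}\mathbf{Y}$ and its $n^{2}$ neighbours (exactly the quantities the algorithm already computes in Section~\ref{sub:pivot}), and then show that no vertex failing to be a global optimum can display both. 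Throughout I will use the classical facts (see \cite{orrick_spectrum,macwilliams2006theory}) that the determinant of any matrix in $\{-1,+1\}^{6\times 6}$ is an integer multiple of $2^{5}=32$ of absolute value at most $160$, and that each of $0,32,64,96,128,160$ is attained; i.e.\ $\mathcal{D}=\{0,32,64,96,128,160\}$ for $n=6$. By Claim~\ref{claim:spectrum}, at any vertex $\mathbf{U}\mathbf{Y}$ and any neighbour $\mathbf{U}'\mathbf{Y}$ the values $|\det\mathbf{U}|$ and $|\det\mathbf{U}'|$ are, up to the common unknown factor $|\det\mathbf{A}^{-1}|$, elements of this six-element set.

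First I would describe $P^{\ast}$ explicitly. Fix one representative maximal matrix $\mathbf{B}^{\ast}\in\{-1,+1\}^{6\times 6}$, so $|\det\mathbf{B}^{\ast}|=160$. By the matrix-determinant-lemma identity of Section~\ref{sub:pivot}, flipping the entry $(i,j)$ changes the determinant to absolute value $160\cdot\bigl|1-2\,\mathbf{B}^{\ast}_{ij}\,[(\mathbf{B}^{\ast})^{-1}]_{ji}\bigr|$. A direct finite computation of these $36$ numbers --- legitimate precisely because weak equivalence collapses the verification to a single orbit --- shows they take exactly three distinct values, all strictly below $160$, the largest of which is $128=(4/5)\cdot 160$. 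The two features of $P^{\ast}$ I will use are therefore: (a) $\mathbf{U}\mathbf{Y}$ attains the maximum determinant in $\mathcal{X}$ (in fact strictly, so it is the top of the poset), and (b) the largest determinant value occurring in $\mathcal{X}$ that is below that maximum is exactly $4/5$ of the maximum.

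It remains to rule out (a) and (b) at every other vertex. If $\mathbf{U}\mathbf{Y}$ is not a local optimum then some neighbour strictly increases $|\det|$, so $\mathbf{U}\mathbf{Y}$ does not attain the maximum of $\mathcal{X}$ and (a) fails. If $\mathbf{U}\mathbf{Y}$ is a local optimum but not a global one, then in the normalisation of Claim~\ref{claim:spectrum} its determinant $D:=|\det\mathbf{U}\mathbf{Y}|$ satisfies $D\in\mathcal{D}$ and $D<160$ (the global optima of \refprogn\ are exactly the vertices with $|\det\mathbf{U}\mathbf{Y}|=160$, by \cite[Lemma 3]{dean2018blind} together with multilinearity of the determinant in the columns). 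Were (b) also to hold, some neighbour would have determinant exactly $(4/5)D$, which would force $(4/5)D\in\mathcal{D}=\{0,32,64,96,128,160\}$ as well; but the only pair $(D,(4/5)D)$ with both members in that set and $D>0$ is $(160,128)$, contradicting $D<160$. Hence (a) together with (b) holds exactly at the global optima, and since by Lemma~\ref{lemma:isomorphic} these all carry isomorphic posets, the poset obtained at a global optimum is uniquely identifiable; this is also precisely the stopping rule recorded for $n=6$ in Section~\ref{sub:stop}.

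The step I expect to be the real obstacle is the explicit determination of $P^{\ast}$ in the second paragraph: one must check, for the maximal $6\times 6$ matrix, both that some single-entry flip reaches determinant exactly $128$ and that no flip returns to determinant $160$, so that $\mathbf{U}\mathbf{Y}$ is the strict top of the poset and exactly three neighbour-values occur. This is finite but genuinely uses the structure of the (single) maximal weak-equivalence class rather than any soft argument, and it is where the value $n=6$ is special: for larger $n$ the spectrum $\mathcal{D}$ is far richer, the implication ``ratio $4/5$ $\Rightarrow$ maximal determinant'' need not survive, and accordingly the analogous stopping criteria for $n=8,10,12$ can only be conjectured.
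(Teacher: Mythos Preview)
Your proposal is correct and follows essentially the same route as the paper: invoke Lemma~\ref{lemma:isomorphic} to reduce to a single maximal representative, verify by a finite check that its $36$ neighbours have determinants $\{64,96,128\}$ (the paper records this as Claim~\ref{claim:n6_maximal}), and then use the arithmetic of the spectrum $\mathcal{D}$ to see that the ratio $4/5$ between the centre and its nearest neighbour can occur only at $D=160$. Your write-up is slightly cleaner in two respects: you correctly list $96\in\mathcal{D}$ (the paper's stated $\mathcal{D}$ omits it, even though Claim~\ref{claim:n6_maximal} uses it), and you dispense with the ``exactly three distinct neighbour values'' filter, relying directly on the pair of conditions (a) strict local maximum and (b) nearest-neighbour ratio $4/5$, which already pin down the global optimum via the observation that $(D,(4/5)D)\subset\mathcal{D}$ forces $D=160$.
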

\begin{proof}
For $n=6$, we have $\mathcal{D} = \{0, 32, 64, 128, 160\}$.  
By Lemma \ref{lemma:isomorphic}, and the fact that all maximal matrices at $n=6$ are weakly equivalent, we can consider the neighbor pattern of any one maximal vertex.  We now require the following claim, which may be verified by inspecting all 36 neighbors of any maximal $n=6$ matrix.
\begin{claim}
\label{claim:n6_maximal}
Any maximal matrix at $n=6$ has only neighbors with determinants $\pm 128, \pm 96$, and $\pm 64$.  
\end{claim}

Given any particular $\mathbf{UY}$, we can determine the value of the objective function obtained by using the procedure outlined in Section \ref{sub:pivot}.  By Claim \ref{claim:n6_maximal}, if a matrix is maximal the objective functions at its neighbors must take on three distinct non-zero values.
Thus, if the neighbors of $\mathbf{UY}$ have three distinct possible non-zero values, this immediately implies that the value of \pref{eq:obj3} obtained at $\mathbf{UY}$ must be either 160 or 128.  

We can distinguish between these two cases by simply considering the relative change between the value of the objective function at $\mathbf{UY}$ and one of its neighbors that is closest in value. 
Call such a candidate vertex $\mathbf{U'Y}$.  
If $\mathbf{UY}$ is maximal, by Claim \ref{claim:spectrum}, we must have $|\det \mathbf{U}' / \det \mathbf{U}| = 4/5$.   
This is because 160 the only element of $\mathcal{D}$ that is divisible by 5. This process uniquely determines that $\mathbf{UY}$ is maximal; if $\mathbf{UY}$ is not maximal than $|\det \mathbf{U}' / \det \mathbf{U}| < 4/5$.
\end{proof}

This completes the proof of Theorem \ref{theorem:stop}.
Proving similar results for larger values of $n$ appears difficult as $n$ grows.  Not only does the number of $\pm 1$ matrices grow exponentially in $n$, but also size of the set $\mathcal{D}$ grows rapidly.  However, we do conjecture that the following criteria is sufficient for detecting global optima at $n=8$.
\begin{conj}
For $n=8$, maximal vertices are the only vertices such that for all 64 neighbors, the objective function decreases by a factor of 0.25.
\end{conj}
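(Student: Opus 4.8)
The plan is to establish the two implications in the conjecture separately, in analogy with the $n=6$ case of Theorem \ref{theorem:stop} but now anchored on the existence of a Hadamard matrix of order $8$. As in the rest of the appendix I work with $k=n$; when $k>n$ the determinant at each of the $n^2$ neighbors is still computed from \pref{eq:neighbor} regardless of feasibility, so the neighbor pattern is unchanged and nothing is lost.

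\emph{Every maximal vertex has the uniform-neighbor property.} A Hadamard matrix of order $8$ exists and all such matrices are equivalent, so every maximal $\pm1$ matrix is weakly equivalent and by Lemma \ref{lemma:isomorphic} it suffices to check a single maximal vertex. Take $\mathbf{UY}=\mathbf{H}$ with $\mathbf{H}$ Hadamard, so $\mathbf{H}^{-1}=\frac18\mathbf{H}^\intercal$ and $\mathbf{U}^{-1}=\mathbf{Y}\mathbf{H}^{-1}=\frac18\mathbf{Y}\mathbf{H}^\intercal$. The neighbor obtained by flipping entry $(i,j)$ replaces $\mathbf{U}$ by $\mathbf{U}+\mathbf{e}_i\Delta$, where $\Delta$ is the unique row vector with $\Delta\mathbf{Y}=-2H_{ij}\mathbf{e}_j^\intercal$. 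By the matrix determinant lemma (cf. \pref{eq:neighbor}) the determinant is rescaled by the factor $|1+\Delta\mathbf{U}^{-1}\mathbf{e}_i| = |1-\frac14 H_{ij}^2| = \frac34$, which is independent of $(i,j)$. Hence all $64$ neighbors decrease $|\det\mathbf{U}|$ by exactly the factor $\frac14$.

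\emph{Only maximal vertices have the property.} For the converse I would invoke Claim \ref{claim:spectrum}: $|\det\mathbf{U}| = D\,|\det\mathbf{A}^{-1}|$ for some $D$ in the determinant spectrum $\mathcal{D}$ of $8\times8$ $\pm1$ matrices, where $\mathcal{D}\subseteq 128\,\mathbb{Z}$ since $2^{n-1}$ divides the determinant of any $\pm1$ matrix of order $n$. If every neighbor decreases the objective by exactly $\frac14$, then $\frac34 D$ must itself lie in $\mathcal{D}$, which forces $D$ to be a multiple of $512$ --- a divisibility constraint on the spectrum in the spirit of the ``$160$ is the only multiple of $5$'' step used for $n=6$, and already a drastic reduction of the candidate list. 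For each surviving value of $D$, write the vertex as $\mathbf{M}=\mathbf{UY}\in\{-1,+1\}^{8\times8}$; flipping entry $(i,j)$ sends $\det\mathbf{M}$ to $\det\mathbf{M}-2c_{ij}$, where $c_{ij}$ is the $(i,j)$ summand of the cofactor expansion of $\det\mathbf{M}$, so the hypothesis is equivalent to $c_{ij}\in\{\frac18\det\mathbf{M},\,\frac78\det\mathbf{M}\}$ for all $64$ pairs $(i,j)$ simultaneously. One then argues that this system is consistent only when $\mathbf{M}$ is Hadamard --- ideally through a structural argument on the minors of $\mathbf{M}$, and failing that by exhausting the finitely many weak-equivalence classes of $8\times8$ $\pm1$ matrices with $|\det|=D$ for each admissible $D$ and inspecting their neighbors, exactly the way Claim \ref{claim:n6_maximal} was verified by inspection for $n=6$.

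The main obstacle is this converse direction. For $n=6$ the spectrum $\{0,32,64,128,160\}$ is small enough that a single divisibility coincidence closes the argument; for $n=8$ the spectrum is much larger and lacks such fortuitous structure, so after the multiple-of-$512$ filter one is still left with several non-maximal candidates and is pushed toward an explicit, computer-assisted enumeration over equivalence classes. It is precisely the absence of a clean closing argument --- together with the fact that the pattern has so far only been verified empirically --- that keeps this statement a conjecture rather than a theorem.
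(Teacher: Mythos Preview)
Your treatment mirrors the paper's exactly: the paper proves only the forward direction --- that every maximal $8\times8$ vertex has the uniform $3/4$ neighbor pattern --- by noting that maximal matrices form a single equivalence class and verifying one instance; it explicitly defers on the converse, citing only empirical evidence, which is precisely why the statement is stated as a conjecture. Your Hadamard computation for the forward direction is cleaner than the paper's ``verify by checking any maximal matrix'' instruction.

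Where you go beyond the paper is the cofactor reformulation for the converse, and this setup actually closes. You observe that the hypothesis forces $c_{ij}\in\{\tfrac18 D,\tfrac78 D\}$ for every $(i,j)$, where $D=\det\mathbf{M}$ and $c_{ij}$ is the $(i,j)$ term of the cofactor expansion. But for each fixed row $i$ that expansion reads $\sum_{j=1}^{8} c_{ij}=D$, and eight summands drawn from $\{\tfrac18 D,\tfrac78 D\}$ can sum to $D$ only if all eight equal $\tfrac18 D$. Hence $M_{ij}C_{ij}=\tfrac18 D$ for all $i,j$ (with $C_{ij}$ the cofactor), i.e.\ $\mathrm{adj}(\mathbf{M})=\tfrac{D}{8}\mathbf{M}^\intercal$, and the identity $\mathbf{M}\,\mathrm{adj}(\mathbf{M})=D\mathbf{I}$ then yields $\mathbf{M}\mathbf{M}^\intercal=8\mathbf{I}$. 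Thus $\mathbf{M}$ is Hadamard and therefore maximal. The divisibility filter and the equivalence-class enumeration you anticipated are unnecessary; both you and the paper stop one line short of a complete proof.
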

It is not hard to see that this condition is necessary. For $n=8$, there is only one equivalence class of strongly equivalent maximal matrices, and such matrices have a determinant of $\pm 4096$. 
Further, for any maximal matrix, changing any single entry results in a matrix with determinant $\pm 3072$.\footnote{This can be verified by checking any maximal matrix.  An example of a maximal matrix at $n=8$ can be obtained by taking the Kronecker product of three 2-dimensional Hadamard matrices.  Lemma \ref{lemma:isomorphic} ensures that all maximal matrices will have the same property.}  
Empirically, we have not found any non-maximal matrices where a similar property holds.  However, we defer on a rigorous proof of this conjecture.
We state similar conjectures for the cases $n=10$ and $n=12$.
\begin{conj}
For $n=10$, a vertex is maximal if and only if 20 of its 100 neighbors decrease the objective function by a fraction of 1/3 and 80 of its 100 neighbors decrease the objective function by a fraction of 1/6.
\end{conj}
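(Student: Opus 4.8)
The plan is to recast the statement as a claim about $10\times10$ sign matrices, dispatch necessity by a finite computation, and then attack sufficiency, where the real difficulty lies. As observed in Section~\ref{sub:pivot}, hopping to a neighbor flips one entry of $\mathbf{UY}$ and changes $\mathbf U$ by a rank-one update $\mathbf e_i\Delta$, so the ratio $|\det\mathbf U'|/|\det\mathbf U|$ between a vertex and a neighbor equals $|\det(\mathbf{UY})'|/|\det\mathbf{UY}|$, where $(\mathbf{UY})'$ is $\mathbf{UY}$ with one entry negated; and by Claim~\ref{claim:spectrum} a vertex is a global optimum exactly when $Z:=\mathbf{UY}\in\{-1,+1\}^{10\times10}$ is a maximal sign matrix. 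Since the algorithm never traverses singular vertices, $Z$ is invertible, and the matrix determinant lemma gives the determinant of the $(i,j)$-flip of $Z$ as $(1-2w_{ij})\det Z$ with $w_{ij}:=Z_{ij}(Z^{-1})_{ji}$; Laplace expansion along row $i$ and along column $j$ gives $\sum_j w_{ij}=1$ and $\sum_i w_{ij}=1$. So the statement to prove is: $Z$ is maximal if and only if exactly $20$ of the numbers $w_{ij}$ satisfy $|1-2w_{ij}|=2/3$ and the other $80$ satisfy $|1-2w_{ij}|=5/6$.

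For necessity I would invoke Lemma~\ref{lemma:isomorphic} together with the fact \cite{smith1989studies} that all maximal $10\times10$ sign matrices are weakly equivalent: every maximal vertex then has an isomorphic neighbor poset, so it suffices to verify the pattern on a single explicit maximal matrix $M$ (one is tabulated in \cite{orrick_spectrum}, with determinant $73728$). One computes the $100$ values $w_{ij}=M_{ij}(M^{-1})_{ji}$ and checks that each equals $1/6$ or $1/12$, with exactly two equal to $1/6$ in each row and each column, so $|1-2w_{ij}|\in\{2/3,5/6\}$ with multiplicities $20$ and $80$.

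For sufficiency, suppose the neighbor pattern holds. Since every neighbor \emph{decreases} $|\det Z|$ we have $0<w_{ij}<1$ for all $i,j$; a ratio of $2/3$ forces $w_{ij}\in\{1/6,5/6\}$ and a ratio of $5/6$ forces $w_{ij}\in\{1/12,11/12\}$. The large values cannot occur: if $w_{ij_0}\in\{5/6,11/12\}$, the other nine entries of row $i$ are each at least $1/12$ and positive, summing to at least $9/12>1/12$ and contradicting $\sum_j w_{ij}=1$. Hence $w_{ij}\in\{1/6,1/12\}$, and the row sums --- and, by the column expansion, the column sums --- equalling $1$ force exactly two $1/6$'s per row and per column. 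Thus every entry of $Z^{-1}$ has absolute value $1/6$ or $1/12$, with the $1/6$'s forming a $2$-regular bipartite pattern (a disjoint union of even cycles), so $\|Z^{-1}\|_F^2=10(2/36+8/144)=10/9$ while $\|Z\|_F^2=100$. Writing $\sigma_1,\dots,\sigma_{10}$ for the singular values of $Z$, this gives $\sum_i\sigma_i^2=100$, $\sum_i\sigma_i^{-2}=10/9$, and $(\det Z)^2=\prod_i\sigma_i^2$; what remains is to conclude $\prod_i\sigma_i^2=73728^2$, i.e.\ that $Z$ is maximal.

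That final step is the crux and, I expect, the main obstacle. For the analogous $n=12$ criterion the same reasoning forces every $w_{ij}=1/12$, so $\|Z^{-1}\|_F^2=1$ and $\|Z\|_F^2\,\|Z^{-1}\|_F^2=144=n^2$ meets the Cauchy--Schwarz bound $\|Z\|_F^2\,\|Z^{-1}\|_F^2\ge n^2$ with equality, forcing all $\sigma_i$ equal and hence $Z$ Hadamard and maximal. For $n=10$, by contrast, $\|Z\|_F^2\,\|Z^{-1}\|_F^2=1000/9>100$, so the singular values are genuinely spread out and the two moments alone do not determine $\det Z$: AM--GM on $\{\sigma_i^{-2}\}$ yields only $|\det Z|\ge 3^{10}=59049<73728$. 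Closing the gap seems to require finer invariants --- higher symmetric functions of the $\sigma_i^2$ (equivalently, traces of powers of $Z^{\intercal}Z$ and of $(Z^{\intercal}Z)^{-1}$), the integrality of every minor of $Z$, and the rigidity of the $2$-regular $1/6$-support of $Z^{-1}$ against $Z$ being $\pm1$-valued --- or else a finite classification: enumerate, up to weak equivalence, all invertible $Z\in\{-1,+1\}^{10\times10}$ whose inverse has every entry in $\{\pm1/6,\pm1/12\}$ and verify that none is non-maximal. The fact that this search does not visibly collapse is exactly why the $n=10$ criterion is offered here only as a conjecture.
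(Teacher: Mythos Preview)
The paper offers no proof of this statement---it is stated as a bare conjecture, without even the necessity sketch that accompanies the $n=8$ case. Your proposal therefore goes well beyond what the paper contains. Your necessity argument is complete and correct: it follows the same template the paper uses for $n=6$ (Lemma~\ref{lemma:isomorphic} together with the fact that the maximal weak-equivalence class is unique at $n=10$ reduces the question to inspecting one explicit matrix). Your reformulation via $w_{ij}=Z_{ij}(Z^{-1})_{ji}$, the row- and column-sum identities $\sum_j w_{ij}=\sum_i w_{ij}=1$, and the exclusion of $w_{ij}\in\{5/6,11/12\}$ by the pigeonhole on row sums are all sound, and the conclusion that $w_{ij}\in\{1/6,1/12\}$ with exactly two $1/6$'s in each row and each column is a genuine structural constraint that the paper does not derive.

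Your diagnosis of where the argument stalls is also accurate: the Frobenius-norm computation and AM--GM on $\{\sigma_i^{-2}\}$ give only $|\det Z|\ge 3^{10}=59049$, short of the target $73728$, and the gap is real rather than an artifact of a loose inequality. The paper gives no hint of how to close it; your suggestion that it likely requires either higher-order trace invariants, integrality of minors, or a finite enumeration up to weak equivalence is reasonable, and your observation that the analogous $n=12$ argument \emph{does} close via equality in Cauchy--Schwarz (forcing $Z$ Hadamard) is a nice contrast that explains why $n=10$ is genuinely harder. In short: there is no error in your reasoning, and you have pushed substantially further than the paper while correctly marking the point at which the conjecture remains open.
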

\begin{conj}
For $n=12$, a vertex is maximal if and only if all 144 of its neighbors decrease the value of the objective function by a fraction of 1/6.
\end{conj}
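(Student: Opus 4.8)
The plan is to strip away everything specific to the decoding problem and reduce the conjecture to an elementary fact about $12\times 12$ sign matrices. As in the opening of this appendix it suffices to treat $k=n=12$, so a (non-singular) vertex is a matrix $\mathbf{V}=\mathbf{UY}\in\{-1,+1\}^{12\times12}$ with $\det\mathbf{V}\neq0$; by Claim~\ref{claim:spectrum}, ``$\mathbf{V}$ is maximal'' means $|\det\mathbf{V}|$ attains the Hadamard bound $12^6$, and since $12$ is a multiple of $4$ and a Hadamard matrix of order $12$ exists, the maximal sign matrices of order $12$ are exactly the Hadamard matrices. Hopping to the neighbor that flips entry $(i,j)$ of $\mathbf{V}$ is a rank-one update of $\mathbf{U}$, and the matrix determinant lemma (as used in \pref{eq:neighbor}) shows this multiplies $|\det\mathbf{U}|$ by $\bigl|1-2\,v_{ij}(\mathbf{V}^{-1})_{ji}\bigr|$. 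So the conjecture becomes: a non-singular $\mathbf{V}\in\{-1,+1\}^{12\times12}$ has $\bigl|1-2\,v_{ij}(\mathbf{V}^{-1})_{ji}\bigr|=5/6$ for every pair $(i,j)$ if and only if $\mathbf{V}$ is Hadamard.

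For the ``only if'' direction, if $\mathbf{V}$ is Hadamard then $\mathbf{V}^{-1}=\mathbf{V}^{\intercal}/12$, hence $v_{ij}(\mathbf{V}^{-1})_{ji}=v_{ij}^{2}/12=1/12$ and every neighbor scales the objective by $|1-2/12|=5/6$. This half needs no weak-equivalence argument: it holds for every Hadamard matrix of order $12$, so the criterion is necessary at every global optimum.

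For the ``if'' direction, suppose all $144$ neighbors scale by exactly $5/6$; then in particular no neighbor is singular, so $\mathbf{V}^{-1}$ exists. From $\bigl|1-2\,v_{ij}(\mathbf{V}^{-1})_{ji}\bigr|=5/6$ we get $2\,v_{ij}(\mathbf{V}^{-1})_{ji}\in\{1/6,\,11/6\}$, i.e.\ $(\mathbf{V}^{-1})_{ji}=v_{ij}\,c_{ji}$ with $c_{ji}\in\{1/12,\,11/12\}$ for every $(i,j)$. Now evaluate the $(a,a)$ entry of $\mathbf{V}^{-1}\mathbf{V}=\mathbf{I}$: it equals $\sum_{b}(\mathbf{V}^{-1})_{ab}v_{ba}=\sum_{b}v_{ba}^{2}\,c_{ab}=\sum_{b}c_{ab}$, which must be $1$. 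But twelve numbers, each $1/12$ or $11/12$, sum to $1$ only when all of them equal $1/12$ (if $t$ of them equal $11/12$ the sum is $(10t+12)/12$, which is $1$ iff $t=0$). Hence $\mathbf{V}^{-1}=\mathbf{V}^{\intercal}/12$, so $\mathbf{V}\mathbf{V}^{\intercal}=12\,\mathbf{I}$ and $\mathbf{V}$ is Hadamard, i.e.\ maximal.

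I do not expect a serious obstacle here: the core is the short computation above. The one point I would take care with is the reduction itself --- making rigorous the appendix's assertion that the $n^{2}$-neighbor test is a local property of the $n\times n$ defining sign matrix, with the stated per-neighbor determinant ratio, also when $k>n$ (one picks $n$ independent columns of $\mathbf{Y}$, works with $\mathbf{V}=\mathbf{U}\tilde{\mathbf{Y}}$ as in \pref{eq:obj3}, and checks that flipping an entry of $\mathbf{V}$ induces exactly the rank-one change of $\mathbf{U}$ quoted above). I would also note that the identical argument, with $5/6$ replaced by $1-2/8=3/4$, settles the $n=8$ conjecture, but that it does \emph{not} extend to $n=6$ or $n=10$, where the maximal matrices are not Hadamard and the neighbor ratios take more than one value --- which is exactly why those cases require the finer determinant-spectrum reasoning used for $n=6$ in this appendix.
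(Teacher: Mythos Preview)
Your argument is correct, and in fact goes strictly beyond the paper: the paper does not prove this statement at all. It is stated there as a conjecture supported only by empirical evidence, with the authors explicitly deferring on a rigorous proof even for the analogous $n=8$ case. So there is no proof in the paper to compare against.

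What you have done is give a clean, self-contained resolution. The reduction to the ratio $\bigl|1-2\,v_{ij}(\mathbf{V}^{-1})_{ji}\bigr|$ is exactly the content of the matrix determinant lemma as used in \pref{eq:neighbor}, and the forward direction is immediate from $\mathbf{V}^{-1}=\mathbf{V}^{\intercal}/12$ for Hadamard $\mathbf{V}$. The key step is the converse: extracting $v_{ij}(\mathbf{V}^{-1})_{ji}\in\{1/12,11/12\}$ from the hypothesis and then reading off the diagonal of $\mathbf{V}^{-1}\mathbf{V}=\mathbf{I}$ to force every $c_{ab}=1/12$. The arithmetic $(10t+12)/12=1\Rightarrow t=0$ is airtight, and the conclusion $\mathbf{V}^{-1}=\mathbf{V}^{\intercal}/12$ yields Hadamard and hence maximal. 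One cosmetic remark: the sentence ``no neighbor is singular, so $\mathbf{V}^{-1}$ exists'' is a non-sequitur---you already assumed $\det\mathbf{V}\neq0$ at the outset, so just drop that clause.

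Your observation that the identical argument with $3/4$ in place of $5/6$ settles the paper's $n=8$ conjecture is also correct (there $c_{ab}\in\{1/8,7/8\}$ and $(6t+8)/8=1\Rightarrow t=0$), and your diagnosis of why the method fails for $n=6,10$---no Hadamard matrix, hence the maximal $\mathbf{V}$ does not satisfy $\mathbf{V}^{-1}=\mathbf{V}^{\intercal}/n$ and the neighbor ratios are not uniform---is exactly right and matches the paper's need for the determinant-spectrum analysis at $n=6$. The caveat you flag about $k>n$ is handled in the paper by the remark at the start of Appendix~\ref{apx:stop}: one computes the objective at all $n^2$ neighbors regardless of feasibility, so the local test reduces to the $k=n$ case you analyze.
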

\bibliographystyle{ieeetr}
\bibliography{IEEEabrv,trdean}

\end{document}